\pdfoutput=1
 \pdfoutput=1
 \pdfoutput=1
 \pdfoutput=1
 \pdfoutput=1
\documentclass[journal,draftcls,onecolumn,12pt,twoside]{IEEEtranTCOM}
%

\normalsize

\ifCLASSINFOpdf
 \else
\fi

\hyphenation{op-tical net-works semi-conduc-tor}
\usepackage[normalem]{ulem}
\usepackage{cite}
\usepackage{graphicx}
\usepackage{hyperref}
\usepackage{cleveref}
\usepackage{amsmath}
\usepackage{amssymb}
\usepackage{bm}
\usepackage{algorithmic}
\usepackage{algorithm}
\usepackage{array}
\usepackage{booktabs}
\usepackage{multirow}
\usepackage{makecell}
\usepackage[justification=centering]{caption}
\bibliographystyle{IEEEtran}

\newtheorem{thm}{Theorem}

\begin{document}
%
\title{Achievable Rate of Multi-Antenna WSRNs with EH Constraint in the presence of a Jammer}
%
%
%

\author{Minhan~Tian$^\ast$ ~Guiguo~Feng$^\dagger$ ~Wangmei~Guo$^\ast$ ~Jingliang~Gao$^\ast$ ~Yongkang~Li$^\ast$
\thanks{The work is supported by NSFC with No.61701375.}
\thanks{The author with $^\ast$ is with the State Key Lab of Integrated Services Networks, Xidian University, Xi'an, China (e-mail: mhtian@stu.xidian.edu.cn,wangmeiguo@xidian.edu.cn,jlgao@xidian.edu.cn,liyongkang@stu.xidian.edu.cn)}
\thanks{The author with $^\dagger$ is with the 20th Research Institute of China Electronics Technology Group Corporation, Xi'an, China (e-mail: fengguiguo@163.com)}}

%
%

\markboth{IEEE Transactions on Communications}%
{Submitted paper}
%



\maketitle

\begin{abstract}
In this paper, the rate-energy region is studied for the wireless sensor relay network (WSRN) with energy harvesting in the presence of a jammer. In the model, a source communicates to a destination equipped with a single antenna with energy harvesting constraint through a multi-antenna cooperative relay under beamforming. Meanwhile, there is a jammer intended to disturb the communication. The relay works in half-duplex mode and knows all the channel state information (CSI). When beamforming is employed at the relay, the network can be modeled as an equivalent Gaussian arbitrarily varying channel (GAVC). We characterize the achievable rate-energy region. Since the problem is non-convex, we present three methods to transform it into a semi-definite programming problem (SDP), and the closed-form expression for two special boundary points of the rate-energy region is obtained. Finally, the simulations show the rate-energy region and the anti-jamming performance of the proposed scheme.
\end{abstract}

\begin{IEEEkeywords}
Energy Harvesting, Gaussian Arbitrarily Varying Channel, Beamforming, Semi-Definite Programming, the Rate-Energy Region
\end{IEEEkeywords}

%
\IEEEpeerreviewmaketitle

\section{Introduction}
%
%
%
%
\IEEEPARstart{T}{he} lifetime of energy-constrained communication networks becomes one of the most challenging issues with the large-scale application of a smart city, especially for massive sensor networks. To extend the network lifetime,
  both aspects of improving network transmission efficiency and harvesting energy might be considered. For the first point, collaborative beamforming (CB) enhances the network utilization by employing the idle node and weighting received signals \cite{t1,t2,b3,c3,c4}. It can effectively extend the signal transmission coverage, improve transmission efficiency \cite{t2}, save network resources, decrease jamming and strengthen security. Additionally, energy harvesting (EH) provides long-term energy support for wireless sensor networks (WSNs)\cite{e4}. Significantly, energy harvesting WSNs (EHWSNs) harvest environmental energy sources such as solar, wind, piezoelectric, and radio frequency signals\cite{d4}. Therefore, it is in line with the concept of environmental protection. For this reason, it is presented as a green communication method for prolonging the lifetime, and is attracting much attention in academia as well as industry \cite{t3,t4,b5,c5,c6}.

Naturally, since the energy harvesting node has higher channel quality requirements than the target node, CB can be applied in EHWSNs. Under this circumstance, more gain will be obtained when combining CB and EH, and some related works are done in \cite{t5,t6,t7}. Jianli Huang \textit{et al.} studied the optimization problem of beamforming under the constraints of relay transmission power and EH \cite{t5}. Xuecai Bao \textit{et al.} proposed a software-defined energy harvesting wireless sensor network (EHWSN) architecture to maximize the signal-to-noise ratio (SNR) for CB with sidelobe constraints by allocating the transmission power of sensor nodes \cite{t6}. However, both of them did not consider the case with jamming source.
In practice, the transmission in WSNs is easily attacked or interfered due to its shared property \cite{t8}. This will cause transmission failures and retransmissions in which the later will degrade network performance and drain the battery of the device rapidly \cite{d8}. Since jamming is an important active attack mode in wireless networks, the anti-jamming problem is one of the aims that should be considered in the designed model.
The various anti-jamming strategies were proposed to eliminate the jamming and improve the legitimate transmission rate for traditional WSNs \cite{t9,t10,c10,c11,c12,c13}, such as spreading spectrum, beamforming and interference alignment (IA). However, there is less work on anti-jamming beamforming scheme for WSNs with EH. Soheil \textit{et al.} \cite{t7} investigated the problem of joint EH time allocation and distributed beamforming in the presence of interference, which is the special case of jamming when the transmission power is higher than the jamming power. Besides, the multiple-antenna relay was missed. Given that large bandwidth is required for spreading spectrum, and that all the channel state information (CSI) should be available for all nodes in IA, we consider beamforming technology at multiple-antenna relay against the jammer for WSN with EH.

In this paper, we consider a dual-hop relay network, which consists of a source node, a multi-antenna cooperative relay, a legitimate receiver, an EH node and a jammer. The relay receives the sum signal from the source and the jammer in one time-slot, and then forwards the received signal to the destination in the
next time-slot with beamforming. With the linear beamforming scheme at the relay, this network can be modeled as an equivalent Gaussian arbitrarily varying channel (GAVC). We mainly study the maximum achievable rate-energy region of EHWSN based on cooperative beamforming at multi-antenna relay under the constraints of EH and sum power. Specifically, we give the formulation of the optimization problem to maximize the transmission rate, which truns out to be non-convex. We propose three methods to transform it into a semi-definite programming based on the aspects of system stability, lower computational complexity, and the combination of both. Then the closed-form expression for two special boundary points of the rate-energy region is obtained. Finally, we present simulation results. Interestingly enough, the simulation results show that the achievable rate is positive even when the power of the jammer is larger than that of the legitimate source by employing the optimal linear beamforming matrix.
The main contributions of this paper are listed as follows.
\begin{itemize}
  \item Anti-jamming ability: Collaborative beamforming is effective to reduce the impact of jamming attacks and makes full use of jamming information to improve the signal to interference plus noise ratio (SINR).
  \item Prolonging the lifetime: The collaborative beamforming and energy harvesting networks reduce the node energy consumption rate from the aspects of increasing the transmission rate and harvesting energy, respectively.
  \item Variety of processing methods: We consider the processing scheme from non-convex problem to convex problem from three aspects, namely stability, complexity and combination of the two.
\end{itemize}

The rest of this paper is organized as follows. We present the system model, the EH constraint and the expression of capacity in Section \ref{sec:model}. Next, the optimization procedure schemes are shown in Section \ref{sec:The Optimization Method} and the performance analysis is given in Section \ref{simulations}. Finally, Section \ref{sec:conclusion} concludes the entire paper.

\emph{\textbf{Notation}}: Scalars are denoted by lower-case letters, e.g., $x$, and bold-face lower-case letters are used for column vectors, e.g., ${\bf{x}}$, and bold-face upper-case letters for matrices, e.g., ${\bf{X}}$.  ${\left(  \cdot  \right)^ * }$, ${\left(  \cdot  \right)^T}$, ${\left(  \cdot  \right)^\dag }$ and $tr(\cdot)$ denote the conjugate, transpose, Hermitian transpose and trace, respectively. ${\rm{diag}}\left( {{x_1},{x_2},...,{x_n}} \right)$ denotes the diagonal square matrix with ${x_1},{x_2},...,{x_n}$ as the diagonal elements and $\left\|  \cdot  \right\|$ denotes the Euclidean norm.  ${{\bf{I}}_n}$ is the $n$-dimensional identity matrix and $\mathbb{E}\left(  \cdot  \right)$ is the expectation operation. The $\otimes$ denotes the Kronecker product.

\section{System Model}\label{sec:model}
\begin{figure}[H]
   \label{figure 1}
	\centering
	\includegraphics[width=3in]{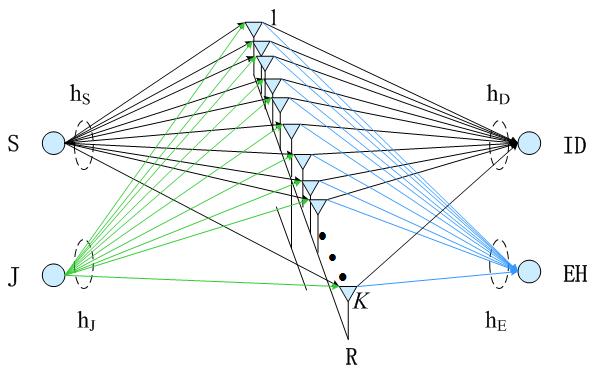}
	\caption{Multiple-antenna Relay Model}
    \label{figure}
\end{figure}
As illustrated in Fig.\ref{figure}, the system consists of an EHWSN with a source node $S$, a jammer $J$, an energy harvesting node EH, an information receiver ID, and a multi-antenna relay node $R$ which is equipped with $K$ antennas. This network model is developed with the following assumptions. Firstly, the transferring scheme is assumed to be simultaneous wireless information and power transfer (SWIPT). Secondly, the direct links between the source (or jammer) and the ID receiver (or EH node) are sufficient to be ignored in the case that they are far away. Thirdly, the flat fading channel is employed in this scene and the perfect synchronization is achieved before the transmission. Finally, the accurate CSI of the network is available at the relay working in half-duplex mode. The specific information transmission process is as follows.

In the first time-slot, the relay $R$ receives the transmitted signals $x_S$, $x_J$ from the source node $S$ with power $P_S$ and the jammer $J$ with power $P_J$, respectively. The signal received at the relay node $R$ can be formed as a $K$-dimensional column vector ${{\bf{y}}_R}$, given by
\begin{align}
\label{case2:1}
{{\bf{y}}_R} = \sqrt{P_S}{{\bf{h}}_S}{x_S} + \sqrt{P_J}{{\bf{h}}_J}{x_J} + {{\bf{z}}_R}.
\end{align}
where, ${{\bf{z}}_R}\sim\mathcal{CN}\left( {0,{\sigma_R}^2{{\bf{I}}_K}} \right)\in\mathbb{C}^{K\times1}$ is the additive Gaussian noise at the relay $R$. Besides, $\mathbf{h}_S=[h_{S, 1}, h_{S, 2}, ..., h_{S, K}]^T\in\mathbb{C}^{K\times1}$ is the channel fading vector from the source node to the relay $R$, in which $h_{S,k}$, $k\in{[1,K]}$ stands for the channel fading coefficient from the source node to the $k$-th antenna at the relay. Moreover, $\mathbf{h}_J=[h_{J, 1}, h_{J, 2}, ..., h_{J, K}]^T\in\mathbb{C}^{K\times1}$ denotes the channel fading vector from the jammer to the relay, where $h_{J,k}$, $k\in{[1,K]}$ indicates the channel fading coefficient from the jammer to the $k$-th antenna at the relay.

In the second time-slot, the signal and wireless energy addressed via beamforming in the relay $R$ are transferred to the ID receiver and the EH receiver, respectively. Assuming that the beamforming matrix is ${\bf{A}}$, the retransmitted signal at the relay is given by
\begin{align}
\label{case2:3}
{{\bf{x}}_R} = \sqrt{P_S}{\bf{A}}{{\bf{h}}_S}{x_S} +  \sqrt{P_J}{\bf{A}}{{\bf{h}}_J}{x_J} + {\bf{A}}{{\bf{z}}_R}.
\end{align}

According to \eqref{case2:3}, the power of the signal transmitted by the relay $R$ can be written as
\begin{align}
\label{case2:4}
{\mathbb{E}}[|{{\bf{x}}_R}{|^2}]=\lVert {\bf{A}}{{\bf{h}}_S}{\rVert^2}{P_S}+\lVert{\bf{A}}{{\bf{h}}_J}{\rVert^2}{P_J}+{\sigma_R}^2tr({\bf{A}}{\bf{A}}^\dag).
\end{align}

Given that the power of the relay $R$ is upper bounded by $P_{R,\max}$, the following condition should be satisfied by ${{\bf{x}}_R}$.
\begin{align}
\label{case2:5}
\mathbb{E}[|{{\bf{x}}_R}{|^2}]\le P_{R,\max}.
\end{align}

The signal received by the receiver node ID is
\begin{align}
\label{case2:6}
{{\bf{y}}_D}&={{\bf{h}}_D}^T {{\bf{x}}_R} + z_D\nonumber\\
&=\sqrt{P_S}{{\bf{h}}_D}^T{\bf{A}}{{\bf{h}}_S}{x_S} + \sqrt{P_J}{{\bf{h}}_D}^T{\bf{A}}{{\bf{h}}_J}{x_J}+{{\bf{h}}_D}^T{\bf{A}}{{\bf{z}}_R}+ z_D.
\end{align}
where, $z_D\sim \mathcal{CN} \left( {0,\sigma _D^2} \right)$ denotes the complex Gaussian noise received at the information receiver ID, which is independent of ${{\bf{z}}_R}$.

Let $x_{S,eq}=\sqrt{P_S}{{\bf{h}}_D}^T{\bf{A}}{{\bf{h}}_S}{x_S}$, $x_{J,eq}=\sqrt{P_J}{{\bf{h}}_D}^T{\bf{A}}{{\bf{h}}_J}{x_J}$, and $z_{eq}={{\bf{h}}_D}^T{\bf{A}}{{\bf{z}}_R}+ z_D$, then $y_D$ is represented as
\begin{align}
\label{case2:7}
{y_D}=x_{S,eq}+x_{J,eq}+{z_{eq}}.
\end{align}

Based on the symmetric condition of the GAVC \cite{t11}, the capacity is
\begin{align}
\label{capacity}
C(\mathbf{A})=\left\{\begin{aligned}\frac{1}{2}\log(1+\text{SINR}),\ \  &\mathbb{E}|x_{S,eq}|^2>\mathbb{E}|x_{J,eq}|^2\\
0,\ \ \ \ \ \ \ \ \ \ &\text{others.}
\end{aligned}\right.
\end{align}
where, $\mathbb{E}|x_{S,eq}|^2>\mathbb{E}|x_{J,eq}|^2$ is the necessary condition to ensure that the deterministic coding capacity of the Gaussian arbitrarily varying channel (GAVC) is nonzero.

The more specific expression of $\text{SINR}$ in \eqref{capacity} is
\begin{align}
\label{case2:8}
\text{SINR}=\frac{|{{\bf{h}}_D}^T{\bf{A}}{{\bf{h}}_S}{|^2}{P_S}}{|{{\bf{h}}_D}^T{\bf{A}}{{\bf{h}}_J}{|^2}{P_J}+\lVert {{\bf{h}}_D}^T{\bf{A}}{\rVert^2}{{\sigma}_R}^2+{{\sigma}_D}^2}.
\end{align}

The harvested power at EH receiver in the second time-slot should meet the EH constraint as follows
\begin{align}
\label{case2:9}
Q_E&=\mathbb{E}[|{{\bf{h}}_E}^T
{{\bf{x}}_R}{|^2}]\nonumber\\
&=|{{\bf{h}}_E}^T{\bf{A}}{{\bf{h}}_S}{|^2}{P_S} + |{{\bf{h}}_E}^T{\bf{A}}{{\bf{h}}_J}{|^2}{P_J} + \lVert{{\bf{h}}_E}^T{\bf{A}}{\rVert^2}{{\sigma}_R}^2\nonumber\\
&\ge Q.
\end{align}
where, $Q$ is the preset threshold of harvesting energy and the energy fading vector from the relay to the energy harvesting node is presented as $\mathbf{h}_E=[h_{E, 1}, h_{E, 2}, ..., h_{E, K}]^T\in\mathbb{C}^{K\times1}$, where $h_{E,k}$, $k\in{[1,K]}$ indicates the energy fading coefficient from the $k$-th antenna at the relay to the energy harvesting node.

A main focus of our research is to obtain the maximum value of $C(\mathbf{A})$ with EH and power constraints. Thus, considering the formulas \eqref{case2:5}, \eqref{case2:8}, \eqref{case2:9} and the monotonic increase of $\log x$, the optimization problem can be performed as
\begin{align}
\label{case2:11}
\underset{\bf{A}}{\max}\ \
&\frac{|{{\bf{h}}_D}^T{\bf{A}}{{\bf{h}}_S}{|^2}{P_S}}{|{{\bf{h}}_D}^T{\bf{A}}{{\bf{h}}_J}{|^2}{P_J}+\lVert{{\bf{h}}_D}^T{\bf{A}}{\rVert^2}{{{\sigma}_R}^2}+{{\sigma}_D}^2}\nonumber \\
\text{s.t.} \ \ &\lVert {\bf{A}}{{\bf{h}}_S}{\rVert^2}{P_S}+\lVert{\bf{A}}{{\bf{h}}_J}{\rVert^2}{P_J}+{\sigma_R}^2tr({\bf{A}}{\bf{A}}^\dag)\le P_{R,\max}\nonumber\\
&|{{\bf{h}}_E}^T{\bf{A}}{{\bf{h}}_S}{|^2}{P_S} + |{{\bf{h}}_E}^T{\bf{A}}{{\bf{h}}_J}{|^2}{P_J} + \lVert{{\bf{h}}_E}^T{\bf{A}}{\rVert^2}{{\sigma}_R}^2\ge Q \nonumber \\
&\frac{{|{{\bf{h}}_D}^T{\bf{A}}{{\bf{h}}_J}{|^2}{P_J}}}{|{{\bf{h}}_D}^T{\bf{A}}{{\bf{h}}_S}{|^2}{P_S}}<1.
\end{align}

\section{The Optimization Method}\label{sec:The Optimization Method}
In this section, several efficient strategies are proposed to simplify \eqref{case2:11} by considering three aspects, system stability, lower computational complexity and the combination. Meanwhile, we discuss how to design an optimal beamforming matrix $\mathbf{A}$ to maximize the capacity $C(\mathbf{A})$ from the source to the ID receiver under the constraints of EH and sum power. Additionally, we study the rate-energy region with a jammer in EHWSN.

\subsubsection{System Stability}
To make the calculation more convenient, considering the stability of the Cartesian product and the formula $vec({\mathbf{A}_1}{\mathbf{A}_2}{\mathbf{A}_3})=({\mathbf{A}_3}^T\otimes {\mathbf{A}_1})\cdot vec({\mathbf{A}_2})$ \cite{t12}, \eqref{case2:11} can be transformed into the following form
\begin{align}
\label{case2:12}
\underset{\bm{\alpha}}{\max}\ \
&\frac{{\bm{\alpha}}^\dag{{\bf{h}}_1}{{\bf{h}}_1}^\dag{\bm{\alpha}}{P_S}}{{\bm{\alpha}}^\dag{{\bf{h}}_2}{{\bf{h}}_2}^\dag{\bm{\alpha}}{P_J}+{\bm{\alpha}}^\dag{{\bf{H}}_1}{{\bf{H}}_1}^\dag{\bm{\alpha}}{{\sigma}_R}^2+{{{\sigma}_D}^2}}\nonumber \\
\text{s.t.} \ \ &{\bm{\alpha}}^\dag \mathbf{\Sigma}{\bm{\alpha}}\le P_{R,\max}\nonumber\\
&{\bm{\alpha}}^\dag{{\bf{h}}_3}{{\bf{h}}_3}^\dag{\bm{\alpha}}{P_S}+{\bm{\alpha}}^\dag{{\bf{h}}_4}{{\bf{h}}_4}^\dag{\bm{\alpha}}{P_J}+{\bm{\alpha}}^\dag{{\bf{H}}_4}{{\bf{H}}_4}^\dag{\bm{\alpha}}{{\sigma}_R}^2\ge Q \nonumber \\
&\varepsilon{\bm{\alpha}}^\dag{{\bf{h}}_1}{{\bf{h}}_1}^\dag{\bm{\alpha}}{P_S}\ge {\bm{\alpha}}^\dag{{\bf{h}}_2}{{\bf{h}}_2}^\dag{\bm{\alpha}}{P_J}.
\end{align}
where, ${{\bf{h}}_1}={{\bf{h}}_S}^\ast\otimes {{\bf{h}}_D}$, ${{\bf{h}}_2}={{\bf{h}}_J}^\ast\otimes {{\bf{h}}_D}$, ${{\bf{h}}_3}={{\bf{h}}_S}^\ast\otimes {{\bf{h}}_E}$, ${{\bf{h}}_4}={{\bf{h}}_J}^\ast\otimes {{\bf{h}}_E}$, ${{\bf{H}}_1}={{\bf{I}}}\otimes {{\bf{h}}_D}$, ${{\bf{H}}_2}={{\bf{h}}_S}^\ast\otimes {{\bf{I}}}$, ${{\bf{H}}_3}={{\bf{h}}_J}^\ast\otimes {{\bf{I}}}$, ${{\bf{H}}_4}={{\bf{I}}}\otimes {{\bf{h}}_E}$, $vec({{\bf{A}}})={\bm{\alpha}}$, $\mathbf{\Sigma}={{\bf{H}}_2}{{\bf{H}}_2}^\dag{P_S}+{{\bf{H}}_3}{{\bf{H}}_3}^\dag{P_J}+{{\sigma}_R}^2{\mathbf{I}}$. In addition, when $\varepsilon \in [0,1)$ approaches 1, it does not affect global optimality. To avoid ${|{{\bf{h}}_D}^T{\bf{A}}{{\bf{h}}_J}{|^2}{P_J}}={|{{\bf{h}}_D}^T{\bf{A}}{{\bf{h}}_S}{|^2}{P_S}}$, which is contrary to the condition that the deterministic coding capacity of the GAVC is non-zero, we may not set $\varepsilon=1$. It is clear that when $\varepsilon$ approaches 1, the larger feasible domain of \eqref{case2:12} is obtained, and it is  more likely to acquire the global optimal solution of the problem.
\subsubsection{lower computational complexity}
One of the advantages of this method is its lower computational complexity obtained by adopting the optimal beamforming matrix structure. Assuming that the singular value decomposition of matrix $[{{\bf{h}}_S},{{\bf{h}}_J},{{\bf{h}}_D},{{\bf{h}}_E}]$ can be written as $[{{\bf{h}}_S},{{\bf{h}}_J},{{\bf{h}}_D},{{\bf{h}}_E}]={\bf{U}}^\ast{\bf{\Omega}}{\bf{V}}^\dag$, where ${\bf{\Omega}}=diag{(\omega_1,\omega_2,...,\omega_r,0,...,0)}$ and ${\bf{U}}=[{\mathbf{U}_1}, {\mathbf{U}_2}]$. $\omega_i (i=1,2,...,r)$ is the positive singular value of the matrix $[{{\bf{h}}_S},{{\bf{h}}_J},{{\bf{h}}_D},{{\bf{h}}_E}]$, where $r$ is the number of non-zero values of the singular solution. Therefore, ${\bf{A}}={\mathbf{U}_1}^\ast{\bf{B}}{\mathbf{U}_1}^\dag+{\mathbf{U}_1}^\ast{\bf{C}}{\mathbf{U}_2}^\dag$ is established, where ${\mathbf{U}_1}$ and ${\mathbf{U}_2}$ represent the former $r$ and the latter $K-r$ columns of ${\bf{U}}$, respectively. Accordingly, \eqref{case2:11} can be rewritten as
\begin{align}
\label{case2:2}
\underset{\bf{B,C}}{\max}\ \
&\frac{|{\mathbf{g}_3}^T{\bf{B}}{\mathbf{g}_1}{|^2}{P_S}}{|{\mathbf{g}_3}^T{\bf{B}}{\mathbf{g}_2}{|^2}{P_J}+\lVert{\mathbf{g}_3}^T{\bf{B}}{\rVert^2}{{{{\sigma}_R}^2}}+\lVert{\mathbf{g}_3}^T{\bf{C}}{\rVert^2}{{{{\sigma}_R}^2}}+{{{\sigma}_D}^2}}\nonumber \\
\text{s.t.} \ \ &{|{\mathbf{g}_4}^T{\bf{B}}{\mathbf{g}_1}{|^2}{P_S}}+{|{\mathbf{g}_4}^T{\bf{B}}{\mathbf{g}_2}{|^2}{P_J}}+\lVert{\mathbf{g}_4}^T{\bf{B}}{\rVert^2}+\lVert{\mathbf{g}_3}^T{\bf{C}}{\rVert^2}\ge Q \nonumber\\
&\lVert{\bf{B}}{\mathbf{g}_1}{\rVert^2}{P_S}+\lVert{\bf{B}}{\mathbf{g}_2}{\rVert^2}{P_J}+tr({\bf{B}}{\bf{B}}^\dag){{\sigma}_R}^2+tr({\bf{C}}{\bf{C}}^\dag){{\sigma}_R}^2 \le P_{R,\max}\nonumber \\
&{|{\mathbf{g}_3}^T{\bf{B}}{\mathbf{g}_2}{|^2}{P_J}}\le\varepsilon{|{\mathbf{g}_3}^T{\bf{B}}{\mathbf{g}_1}{|^2}{P_S}}.
\end{align}
where, ${\mathbf{g}_1}={\mathbf{U}_1}^\dag{{\bf{h}}_S}$, ${\mathbf{g}_2}={\mathbf{U}_1}^\dag{{\bf{h}}_J}$, ${\mathbf{g}_3}={\mathbf{U}_1}^\dag{{\bf{h}}_D}$, ${\mathbf{g}_4}={\mathbf{U}_1}^\dag{{\bf{h}}_E}$.

Because the calculation is based on vector, technically, it is feasible to transform \eqref{case2:2} to the optimization problem as follows
\begin{align}
\label{case2:28}
\underset{\mathbf{b},\bf{c}}{\max}\ \
&\frac{|{\hat{\bf{h}}_1}^T{\bf{b}}{|^2}{P_S}}{|{\hat{\bf{h}}_2}^T{\bf{b}}{|^2}{P_J}+\lVert{\hat{\bf{H}}_3}{\bf{b}}{\rVert^2}{{\sigma}_R}^2+\lVert{\hat{\bf{H}}_3}{\bf{c}}{\rVert^2}{{\sigma}_R}^2+{{{\sigma}_D}^2}}\nonumber \\
\text{s.t.} \ \ &{|{\hat{\bf{h}}_3}^T{\bf{b}}{|^2}{P_S}}+{|{\hat{\bf{h}}_4}^T{\bf{b}}{|^2}{P_J}}+\lVert{\hat{\bf{H}}_4}{\bf{b}}{\rVert^2}{{\sigma}_R}^2+\lVert{\hat{\bf{H}}_4}{\bf{c}}{\rVert^2}{{\sigma}_R}^2\ge Q \nonumber\\
& {\bf{b}}^\dag \mathbf{\Phi} {\bf{b}}+{{\sigma}_R}^2 {\bf{c}}^\dag{\bf{c}}\le P_{R,\max}\nonumber \\
&{|{\hat{\bf{h}}_2}^T{\bf{b}}{|^2}{P_J}}\le\varepsilon{|{\hat{\bf{h}}_1}^T{\bf{b}}{|^2}{P_S}}.
\end{align}
where, $vec({{\bf{B}}})={\bf b}$, $vec({{\bf{C}}})={\bf c}$, ${\hat{\bf{h}}_1}=vec({\mathbf{g}_1}{\mathbf{g}_3}^T)\in\mathbb{C}^{r^2\times1}$, ${\hat{\bf{h}}_2}=vec({\mathbf{g}_2}{\mathbf{g}_3}^T)\in\mathbb{C}^{r^2\times1}$, ${\hat{\bf{h}}_3}=vec({\mathbf{g}_1}{\mathbf{g}_4}^T)\in\mathbb{C}^{r^2\times1}$, ${\hat{\bf{h}}_4}=vec({\mathbf{g}_2}{\mathbf{g}_4}^T)\in\mathbb{C}^{r^2\times1}$, ${\hat{\bf{H}}_3}={\mathbf{g}_3}^T\otimes{\mathbf{I}_r}\in\mathbb{C}^{r^2\times r}$, ${\hat{\bf{H}}_4}={\mathbf{g}_4}^T\otimes{\mathbf{I}_r}\in\mathbb{C}^{r^2\times r}$,
${\bf{\Theta}}={\mathbf{g}_1}{\mathbf{g}_1}^\dag{P_S}+{\mathbf{g}_2}{\mathbf{g}_2}^\dag{P_J}+{{\sigma}_R}^2{\mathbf{I}_r}\in\mathbb{C}^{r\times r}$. Since $\mathbf{\Phi}$ is semi-definite, $\mathbf{\Phi}={\bf{\Omega}}^\dag {\bf{\Omega}}$ is met, and it has the following expression
\begin{align}
\mathbf{\Phi}&= diag\left\{\begin{matrix} \underbrace{\bf{\Theta},\cdots,\bf{\Theta}}_{{r}}\end{matrix}\right\}\in\mathbb{C}^{r^2\times r^2}.\nonumber
\end{align}

\begin{thm}
The optimal beamforming matrix structure is that ${\bf{A}}={\mathbf{U}_1}^\ast{\bf{B}}{\mathbf{U}_1}^\dag+{\mathbf{U}_1}^\ast{\bf{C}}{\mathbf{U}_2}^\dag$.
\end{thm}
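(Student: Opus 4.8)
The plan is to prove the statement is a without-loss-of-optimality claim: enlarge $[\mathbf{U}_1,\mathbf{U}_2]=\mathbf{U}$ to the full left-singular basis, expand an arbitrary beamformer $\mathbf{A}$ into four blocks against this basis, and show that the two blocks absent from the stated form can be set to zero without changing the objective or violating any constraint. Concretely, since $\mathbf{U}^\ast(\mathbf{U}^\ast)^\dagger=\mathbf{U}\mathbf{U}^\dagger=\mathbf{I}$, every $\mathbf{A}$ admits the identity
\begin{align}
\mathbf{A}=\mathbf{U}_1^\ast\mathbf{B}\mathbf{U}_1^\dagger+\mathbf{U}_1^\ast\mathbf{C}\mathbf{U}_2^\dagger+\mathbf{U}_2^\ast\mathbf{D}\mathbf{U}_1^\dagger+\mathbf{U}_2^\ast\mathbf{E}\mathbf{U}_2^\dagger,\nonumber
\end{align}
where the four blocks are read off from $\mathbf{W}=\mathbf{U}^T\mathbf{A}\mathbf{U}$ partitioned along the split $r,K-r$. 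The theorem then amounts to showing that an optimal solution can always be taken with $\mathbf{D}=\mathbf{0}$ and $\mathbf{E}=\mathbf{0}$.

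The engine of the argument is that the four channels lie in the active singular subspace, so they are annihilated by $\mathbf{U}_2$ in the appropriate sense: $\mathbf{U}_2^\dagger\mathbf{h}_X=\mathbf{0}$ and, by transposition, $\mathbf{h}_X^T\mathbf{U}_2^\ast=\mathbf{0}$ for $X\in\{S,J,D,E\}$. Substituting the expansion into every quantity appearing in \eqref{case2:11}, I would check that the transmit terms $\mathbf{A}\mathbf{h}_S,\mathbf{A}\mathbf{h}_J$ lose their $\mathbf{C},\mathbf{E}$ contributions (the right factor $\mathbf{U}_2^\dagger$ kills $\mathbf{h}_S,\mathbf{h}_J$), while left-multiplication by $\mathbf{h}_D^T$ or $\mathbf{h}_E^T$ kills the $\mathbf{D},\mathbf{E}$ contributions (the left factor $\mathbf{U}_2^\ast$ is annihilated). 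Consequently the SINR objective, the EH constraint, and the GAVC non-triviality constraint all reduce to expressions in $(\mathbf{B},\mathbf{C})$ alone, exactly recovering \eqref{case2:2}; in particular they are completely independent of $\mathbf{D}$ and $\mathbf{E}$.

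The only place $\mathbf{D},\mathbf{E}$ survive is the sum-power constraint. Because the four outer factors act between mutually orthogonal subspaces, the blocks are Frobenius-orthogonal and $\operatorname{tr}(\mathbf{A}\mathbf{A}^\dagger)=\lVert\mathbf{B}\rVert_F^2+\lVert\mathbf{C}\rVert_F^2+\lVert\mathbf{D}\rVert_F^2+\lVert\mathbf{E}\rVert_F^2$, while $\lVert\mathbf{A}\mathbf{h}_S\rVert^2=\lVert\mathbf{B}\mathbf{g}_1\rVert^2+\lVert\mathbf{D}\mathbf{g}_1\rVert^2$ and similarly for $\mathbf{h}_J$. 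Every power term is therefore nondecreasing in $\mathbf{D}$ and $\mathbf{E}$. The conclusion is then a projection argument: for any feasible $\mathbf{A}$ with blocks $(\mathbf{B},\mathbf{C},\mathbf{D},\mathbf{E})$, the matrix $\mathbf{A}'=\mathbf{U}_1^\ast\mathbf{B}\mathbf{U}_1^\dagger+\mathbf{U}_1^\ast\mathbf{C}\mathbf{U}_2^\dagger$ attains the same objective value, satisfies the EH and GAVC constraints identically, and consumes no more power, hence is feasible and at least as good. An optimum of the stated form thus always exists.

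The hard part will be the conjugate/transpose bookkeeping rather than any deep idea. Because the receive and harvest quantities use the plain transpose $\mathbf{h}_D^T\mathbf{A}$ whereas the power uses the Hermitian $\mathbf{A}\mathbf{A}^\dagger$, I must pin down the convention $[\mathbf{h}_S,\mathbf{h}_J,\mathbf{h}_D,\mathbf{h}_E]=\mathbf{U}^\ast\mathbf{\Omega}\mathbf{V}^\dagger$ precisely enough to guarantee both $\mathbf{U}_2^\dagger\mathbf{h}_X=\mathbf{0}$ and $\mathbf{h}_X^T\mathbf{U}_2^\ast=\mathbf{0}$, and to confirm that the cross-block traces vanish through the orthonormality relations $\mathbf{U}_1^\dagger\mathbf{U}_2=\mathbf{0}$ and $\mathbf{U}_1^T\mathbf{U}_1^\ast=\mathbf{I}$. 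Once these identities are verified, the separation of variables together with the monotonicity of the power constraint makes the structural optimality immediate.
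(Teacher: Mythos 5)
Your proposal is correct and takes essentially the same route as the paper's own proof in Appendix~A: the identical four-block expansion $\mathbf{A}=\mathbf{U}_1^\ast\mathbf{B}\mathbf{U}_1^\dag+\mathbf{U}_1^\ast\mathbf{C}\mathbf{U}_2^\dag+\mathbf{U}_2^\ast\mathbf{D}\mathbf{U}_1^\dag+\mathbf{U}_2^\ast\mathbf{E}\mathbf{U}_2^\dag$, the same annihilation lemma $\mathbf{U}_2^\dag\mathbf{h}_X=\mathbf{0}$ for $X\in\{S,J,D,E\}$, and the same observation that setting $\mathbf{D}=\mathbf{E}=\mathbf{0}$ leaves the objective, EH, and GAVC constraints unchanged while only loosening the sum-power constraint. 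Your bookkeeping worry is legitimate but resolves cleanly: under the appendix's convention $[\mathbf{h}_S,\mathbf{h}_J,\mathbf{h}_D,\mathbf{h}_E]=\mathbf{U}\mathbf{\Omega}\mathbf{V}^\dag$ (the main text's $\mathbf{U}^\ast$ is an inconsistency of the paper), the second identity $\mathbf{h}_X^T\mathbf{U}_2^\ast=(\mathbf{U}_2^\dag\mathbf{h}_X)^T=\mathbf{0}$ follows from the first by transposition, exactly as you anticipated.
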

\begin{proof}
The proof of this theorem can be seen in the appendix A.
\end{proof}

\subsubsection{The Combination}
Note that stability and lower complexity are two major factors to estimate the model. Hence, combing all of the previous works should arouse our attention. The relative form over this point can be presented as
\begin{align}
\label{case2:30}
\underset{\tilde{\bf{b}},\tilde{\bf{c}}}{\max}\ \
&\frac{{\tilde{\bf{b}}}^\dag{\tilde{\bf{h}}_1}{\tilde{\bf{h}}_1}^\dag{\tilde{\bf{b}}}{P_S}}{\tilde{{\bf{b}}}^\dag{\tilde{\bf{h}}_2}{\tilde{\bf{h}}_2}^\dag{\tilde{\bf{b}}}{P_J}+{\tilde{\bf{b}}}^\dag{\tilde{\bf{H}}_1}{\tilde{\bf{H}}_1}^\dag{\tilde{\bf{b}}}{{{{\sigma}_R}^2}}+{\tilde{\bf{c}}}^\dag{\tilde{\bf{H}}_1}{\tilde{\bf{H}}_1}^\dag{\tilde{\bf{c}}}{{{{\sigma}_R}^2}}+{{{\sigma_D}}^2}}\nonumber\\
\text{s.t.} \ \ &{\tilde{\bf{b}}}^\dag{\tilde{\bf{h}}_3}{\tilde{\bf{h}}_3}^\dag{\tilde{\bf{b}}}{P_S}+{\tilde{\bf{b}}}^\dag{\tilde{\bf{h}}_4}{\tilde{\bf{h}}_4}^\dag{\tilde{\bf{b}}}{P_J}+{\tilde{\bf{b}}}^\dag{\tilde{\bf{H}}_4}{\tilde{\bf{H}}_4}^\dag{\tilde{\bf{b}}}{{{{\sigma}_R}^2}}+{\tilde{\bf{c}}}^\dag{\tilde{\bf{H}}_4}{\tilde{\bf{H}}_4}^\dag{\tilde{\bf{c}}}{{{{\sigma}_R}^2}}\ge Q \nonumber\\
&{\tilde{\bf{b}}}^\dag{\tilde{\bf{H}}_2}{\tilde{\bf{H}}_2}^\dag{\tilde{\bf{b}}}{P_S}+{\tilde{\bf{b}}}^\dag{\tilde{\bf{H}}_3}{\tilde{\bf{H}}_3}^\dag{\tilde{\bf{b}}}{P_J}+{\tilde{\bf{b}}}^\dag{\tilde{\bf{b}}}{{{{\sigma}_R}^2}}+{\tilde{\bf{c}}}^\dag{\tilde{\bf{c}}}{{{{\sigma}_R}^2}} \le P_{R,\max}\nonumber \\
&{\tilde{\bf{b}}}^\dag{\tilde{\bf{h}}_2}{\tilde{\bf{h}}_2}^\dag{\tilde{\bf{b}}}{P_J} \le \varepsilon{\tilde{\bf{b}}}^\dag{\tilde{\bf{h}}_1}{\tilde{\bf{h}}_1}^\dag{\tilde{\bf{b}}}{P_S}.
\end{align}
where, ${\tilde{\bf{h}}_1}={{\bf{g}}_1}^\ast\otimes {{\bf{g}}_3}$, ${\tilde{\bf{h}}_2}={{\bf{g}}_2}^\ast\otimes {{\bf{g}}_3}$, ${\tilde{\bf{h}}_3}={{\bf{g}}_1}^\ast\otimes {{\bf{g}}_4}$, ${\tilde{\bf{h}}_4}={{\bf{g}}_2}^\ast\otimes {{\bf{g}}_4}$, ${\tilde{\bf{H}}_1}={{\bf{I}}}\otimes {{\bf{g}}_3}$, ${\tilde{\bf{H}}_2}={{\bf{g}}_1}^\ast\otimes {{\bf{I}}}$, ${\tilde{\bf{H}}_3}={{\bf{g}}_2}^\ast\otimes {{\bf{I}}}$, ${\tilde{\bf{H}}_4}={{\bf{I}}}\otimes {{\bf{g}}_4}$, $ vec({\tilde{\bf{B}}})={\tilde{\bf{b}}}$, $vec({\tilde{\bf{C}}})={\tilde{\bf{c}}}$. Besides, when $\varepsilon \in [0,1)$ approaches 1, it does not affect global optimality.

\subsection{ The optimal beamforming matrix design with EH constraint}
In this subsection, we will discuss how to solve the above problems specifically. Obviously, the problem \eqref{case2:12} is a non-convex optimization problem which is difficult to solve in the polynomial time. In the following, we show the transformation methods to convert it into a standard convex SDP problem in detail.

To simplify the third constraint in \eqref{case2:12}, the auxiliary variables $u$ and $v$ are introduced. Given that ${\rho_0} = \frac{{{P_J}}}{{{P_S}}}$, ${\rho_1} = \frac{{\sigma_R^2}}{{{P_S}}}$, ${\rho_2} = \frac{{\sigma _D^2}}{{{P_S}}}$, and ${\sigma_R^2}={\sigma_D^2}=1$, the expression about $u$ and $v$ should be $u^2\ge {\bm{\alpha}}^\dag{{\bf{h}}_2}{{\bf{h}}_2}^\dag{\bm{\alpha}}$ and $v^2\ge{\rho_0}{u^2}+{\rho_1}{\bm{\alpha}}^\dag{{\bf{H}}_1}{{\bf{H}}_1}^\dag{\bm{\alpha}}+{\rho_2}$. It is worth noting that the third condition given in \eqref{case2:12} is addressed as the ratio of the equivalent jammer power to the equivalent source power rather than the difference of them, since it is a more robust formulation regardless of the values of the equivalent powers. Based on the assumptions, \eqref{case2:12} can be expressed as
\begin{align}
\label{case2:14}
\underset{\bm{\alpha},v^2,u^2}{\max}\ \
&\frac{{\bm{\alpha}}^\dag{{\bf{h}}_1}{{\bf{h}}_1}^\dag{\bm{\alpha}}}{v^2} \nonumber \\
\text{s.t.} \ \ &v^2\ge{\rho_0}{\bm{\alpha}}^\dag{{\bf{h}}_2}{{\bf{h}}_2}^\dag{\bm{\alpha}}+{\rho_1}{\bm{\alpha}}^\dag{{\bf{H}}_1}{{\bf{H}}_1}^\dag{\bm{\alpha}}+{\rho_2}\nonumber\\
&{\bm{\alpha}}^\dag{{\bf{h}}_1}{{\bf{h}}_1}^\dag{\bm{\alpha}}\ge \frac{{\rho_0} u^2}{\varepsilon}\nonumber\\
&{\bm{\alpha}}^\dag{{\bf{h}}_2}{{\bf{h}}_2}^\dag{\bm{\alpha}}\le u^2\\
&{\bm{\alpha}}^\dag{{\bf{h}}_3}{{\bf{h}}_3}^\dag{\bm{\alpha}}{P_S}+{\bm{\alpha}}^\dag{{\bf{h}}_4}{{\bf{h}}_4}^\dag{\bm{\alpha}}{P_J}+{\bm{\alpha}}^\dag{{\bf{H}}_4}{{\bf{H}}_4}^\dag{\bm{\alpha}}\ge Q \nonumber \\
&{\bm{\alpha}}^\dag{{\bf{H}}_2}{{\bf{H}}_2}^\dag{\bm{\alpha}}{P_S}+{\bm{\alpha}}^\dag{{\bf{H}}_3}{{\bf{H}}_3}^\dag{\bm{\alpha}}{P_J}+{\bm{\alpha}}^\dag{\bm{\alpha}}\le P_{R,\max}.\nonumber
\end{align}

Furthermore, according to the basic definition of the convex optimization problem, both the objective function and the constraint conditions are required to be convex. Thus, let $\bm{\beta}=\frac{\bm{a}}{v}$, $b^2=\frac{1}{v^2}$ and $c^2=u^2b^2$, so that there is no fraction in the optimization problem. Depending on the previous process, \eqref{case2:14} is transformed as
\begin{align}
\label{case2:17}
\underset{\bm{\beta},b^2,c^2}{\max}\ \
&{{\bm{\beta}}^\dag{{\bf{h}}_1}{{\bf{h}}_1}^\dag{\bm{\beta}}} \nonumber \\
\text{s.t.} \ \ &{\rho_0}{c^2}+{\rho_1}{\bm{\beta}}^\dag{{\bf{H}}_1}{{\bf{H}}_1}^\dag{\bm{\beta}}+{\rho_2}{b^2}\le1\nonumber\\
&{\bm{\beta}}^\dag{{\bf{h}}_1}{{\bf{h}}_1}^\dag{\bm{\beta}}\ge \frac{{\rho_0} c^2}{\varepsilon}\nonumber\\
&{\bm{\beta}}^\dag{{\bf{h}}_2}{{\bf{h}}_2}^\dag{\bm{\beta}}\le{c^2}\\
&{\bm{\beta}}^\dag{{\bf{h}}_3}{{\bf{h}}_3}^\dag{\bm{\beta}}{P_S}+{\bm{\beta}}^\dag{{\bf{h}}_4}{{\bf{h}}_4}^\dag{\bm{\beta}}{P_J}+{\bm{\beta}}^\dag{{\bf{H}}_4}{{\bf{H}}_4}^\dag{\bm{\beta}}\ge {Q}{b^2} \nonumber \\
&{\bm{\beta}}^\dag{{\bf{H}}_2}{{\bf{H}}_2}^\dag{\bm{\beta}}{P_S}+{\bm{\beta}}^\dag{{\bf{H}}_3}{{\bf{H}}_3}^\dag{\bm{\beta}}{P_J}+{\bm{\beta}}^\dag{\bm{\beta}}\le {P_{R,\max}}{b^2}.\nonumber
\end{align}

Observing \eqref{case2:17}, its form is similar to SDP problem. Thus, let $\mathbf{X}=\bm{\beta}\bm{\beta}^\dag$, so \eqref{case2:17} is equivalent to the following optimal problem
\begin{align}
\label{case2:18}
\underset{\mathbf{X},b^2,c^2}{\max}\ \
&tr({{{\bf{h}}_1}{{\bf{h}}_1}^\dag \mathbf{X}}) \nonumber \\
\text{s.t.} \ \ &{\rho_0}{c^2}+tr({\rho_1}{{\bf{H}}_1}{{\bf{H}}_1}^\dag \mathbf{X})+{\rho_2}{b^2}\le1\nonumber\\
&tr({{\bf{h}}_2}{{\bf{h}}_2}^\dag \mathbf{X})\le{c^2}, tr({{\bf{h}}_1}{{\bf{h}}_1}^\dag \mathbf{X})\ge \frac{{\rho_0} c^2}{\varepsilon}\\
&tr[({{\bf{h}}_3}{{\bf{h}}_3}^\dag{P_S}+{{\bf{h}}_4}{{\bf{h}}_4}^\dag{P_J}+{{\bf{H}}_4}{{\bf{H}}_4}^\dag)\mathbf{X}]\ge {Q}{b^2} \nonumber \\
&tr[({{\bf{H}}_2}{{\bf{H}}_2}^\dag{P_S}+{{\bf{H}}_3}{{\bf{H}}_3}^\dag{P_J}+{{\bf{I}}})\mathbf{X}]\le {P_{R,\max}}{b^2}\nonumber\\
&\mathbf{X}\succeq 0,rank(\mathbf{X})\le1.\nonumber
\end{align}

Apparently, $rank(\mathbf{X})\le1$ is non-convex. To eliminate this constraint, the semi-determined relaxation method is employed. Further, \eqref{case2:18} is changed to the following standard convex problem
\begin{align}
\label{case2:19}
\underset{\mathbf{X},b^2,c^2}{\max}\ \
&tr({{{\bf{h}}_1}{{\bf{h}}_1}^\dag \mathbf{X}}) \nonumber \\
\text{s.t.} \ \ &{\rho_0}{c^2}+tr({\rho_1}{{\bf{H}}_1}{{\bf{H}}_1}^\dag \mathbf{X})+{\rho_2}{b^2}\le1, \mathbf{X}\succeq 0 \nonumber\\
&tr({{\bf{h}}_2}{{\bf{h}}_2}^\dag \mathbf{X})\le{c^2},tr({{\bf{h}}_1}{{\bf{h}}_1}^\dag \mathbf{X})\ge \frac{{\rho_0} c^2}{\varepsilon}\\
&tr[({{\bf{h}}_3}{{\bf{h}}_3}^\dag{P_S}+{{\bf{h}}_4}{{\bf{h}}_4}^\dag{P_J}+{{\bf{H}}_4}{{\bf{H}}_4}^\dag)\mathbf{X}]\ge {Q}{b^2} \nonumber \\
&tr[({{\bf{H}}_2}{{\bf{H}}_2}^\dag{P_S}+{{\bf{H}}_3}{{\bf{H}}_3}^\dag{P_J}+{{\bf{I}}})\mathbf{X}]\le {P_{R,\max}}{b^2}.\nonumber
\end{align}

The problem \eqref{case2:19} can be solved in the polynomial time using the standard interior point method. Besides, the optimal solution in \eqref{case2:19} may not meet the rank-1 constraint, which means, it might not be the optimal solution in \eqref{case2:18}. The following theorem addresses this challenge. Therefore, there is a conclusion that the optimal solution of the original optimization problem does not change during this process.

\begin{thm}
Assuming that $(\mathbf{X},b,c)$ is the optimal solution for optimization problem \eqref{case2:19}, then there is always a $(\mathbf{X}^\star,b^\star,c^\star)$ satisfying $rank(\mathbf{X}^\star)=1$ and we can find it in polynomial time.
\end{thm}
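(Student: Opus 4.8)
The plan is to prove this theorem via the Karush–Kuhn–Tucker (KKT) conditions of the relaxed SDP \eqref{case2:19}, exploiting that the objective matrix $\mathbf{h}_1\mathbf{h}_1^\dag$ is rank one and that the constraint matrices carry a Kronecker structure. First I would check that \eqref{case2:19} is strictly feasible (Slater's condition) whenever the original problem is feasible and $\varepsilon<1$, so strong duality holds and the KKT system characterizes optimality. Fixing the scalars $b,c$ at their optimal values $b^\star,c^\star$ turns \eqref{case2:19} into an SDP in $\mathbf{X}\in\mathbb{C}^{K^2\times K^2}$ alone, with five affine constraints and the conic constraint $\mathbf{X}\succeq 0$; it then suffices to produce a rank-one optimizer of this reduced SDP.

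Writing $\mathbf{Z}\succeq 0$ for the dual matrix of $\mathbf{X}\succeq 0$ and $\mu_1,\dots,\mu_5\ge 0$ for the inequality multipliers, and abbreviating the power- and EH-constraint matrices by $\mathbf{E}_P=\mathbf{H}_2\mathbf{H}_2^\dag P_S+\mathbf{H}_3\mathbf{H}_3^\dag P_J+\mathbf{I}$ and $\mathbf{E}_Q=\mathbf{h}_3\mathbf{h}_3^\dag P_S+\mathbf{h}_4\mathbf{h}_4^\dag P_J+\mathbf{H}_4\mathbf{H}_4^\dag$, the stationarity condition collects to
\[
\mathbf{Z}=\mathbf{D}-(1+\mu_3)\mathbf{h}_1\mathbf{h}_1^\dag,\qquad \mathbf{D}=\mu_1\rho_1\mathbf{H}_1\mathbf{H}_1^\dag+\mu_2\mathbf{h}_2\mathbf{h}_2^\dag+\mu_5\mathbf{E}_P-\mu_4\mathbf{E}_Q.
\]
Complementary slackness gives $\mathbf{Z}\mathbf{X}^\star=0$, whence $\mathrm{range}(\mathbf{X}^\star)\subseteq\ker\mathbf{Z}$ and $\mathrm{rank}(\mathbf{X}^\star)\le K^2-\mathrm{rank}(\mathbf{Z})$. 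If I can show $\mathbf{D}\succ 0$, then subtracting the rank-one term $\mathbf{h}_1\mathbf{h}_1^\dag$ drops the rank by at most one, so $\mathrm{rank}(\mathbf{Z})\ge K^2-1$ and hence $\mathrm{rank}(\mathbf{X}^\star)\le 1$; since the optimal value is positive (otherwise the rate is zero and the claim is trivial), $\mathbf{X}^\star\ne 0$ and its rank is exactly one, and $\bm\beta^\star$ is recovered as its principal eigenvector.

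The main obstacle is precisely establishing $\mathbf{D}\succ 0$: because the EH constraint enters with a $\ge$ sign it contributes the negative, higher-than-rank-one term $-\mu_4\mathbf{E}_Q$, so a generic rank-reduction count would only yield $\mathrm{rank}(\mathbf{X}^\star)\le 2$. To overcome this I would use the Kronecker identities $\mathbf{E}_P=\mathbf{M}\otimes\mathbf{I}_K$ and $\mathbf{E}_Q=\mathbf{M}\otimes(\mathbf{h}_E\mathbf{h}_E^\dag)$, where $\mathbf{M}=P_S(\mathbf{h}_S\mathbf{h}_S^\dag)^\ast+P_J(\mathbf{h}_J\mathbf{h}_J^\dag)^\ast+\mathbf{I}_K\succ 0$, which reduce the matrix requirement $\mu_5\mathbf{E}_P-\mu_4\mathbf{E}_Q\succ 0$ to the single scalar inequality $\mu_5>\mu_4\lVert\mathbf{h}_E\rVert^2$. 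The delicate part is to certify this inequality at the optimum, via a case split on whether the EH constraint is active together with the complementary-slackness relations linking $\mu_4$ and $\mu_5$ (and using $\mu_5>0$, i.e.\ the relay power constraint being tight); once it holds, the remaining positive semidefinite terms render $\mathbf{D}\succ 0$ and the rank bound closes.

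For the polynomial-time claim, I would observe that \eqref{case2:19} is a standard convex SDP solvable to arbitrary accuracy by an interior-point method in polynomial time, and that whenever the returned optimum has rank exceeding one, the complementary-slackness construction above---equivalently a standard rank-reduction (purification) routine that solves a short sequence of linear systems restricted to $\mathrm{range}(\mathbf{X})$---returns a rank-one $\mathbf{X}^\star$ of equal objective value, all in polynomial time.
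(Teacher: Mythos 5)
Your route coincides with the paper's treatment of the case ${\bf h}_S \nparallel {\bf h}_J$: KKT stationarity written as $\mathbf{Z}=\mathbf{D}-(1+\mu_3)\mathbf{h}_1\mathbf{h}_1^\dag$, complementary slackness $\mathbf{Z}\mathbf{X}^\star=0$, and a rank count on the kernel of $\mathbf{Z}$. One ingredient of yours is genuinely sharper than anything in the paper: the Kronecker identities $\mathbf{E}_P=\mathbf{M}\otimes\mathbf{I}_K$ and $\mathbf{E}_Q=\mathbf{M}\otimes(\mathbf{h}_E\mathbf{h}_E^\dag)$ are correct (they follow from $\mathbf{H}_2\mathbf{H}_2^\dag=(\mathbf{h}_S^\ast\mathbf{h}_S^T)\otimes\mathbf{I}_K$, $\mathbf{h}_3\mathbf{h}_3^\dag=(\mathbf{h}_S^\ast\mathbf{h}_S^T)\otimes(\mathbf{h}_E\mathbf{h}_E^\dag)$, $\mathbf{H}_4\mathbf{H}_4^\dag=\mathbf{I}_K\otimes(\mathbf{h}_E\mathbf{h}_E^\dag)$), and they neatly reduce $\mathbf{D}\succ 0$ to the scalar condition $\mu_5>\mu_4\lVert\mathbf{h}_E\rVert^2$.

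However, there is a genuine gap precisely at the step you defer as ``the delicate part'': you never certify $\mu_5>\mu_4\lVert\mathbf{h}_E\rVert^2$ when the EH constraint is active, and complementary slackness cannot deliver it. If the EH constraint is slack, then $\mu_4=0$ and your argument closes; this is in substance what the paper does, assuming the fourth inequality strictly holds, concluding $n^\star=\nu^\star=0$ and $\mathrm{rank}(\mathbf{X}^\star)\le\mathrm{rank}[(1+m^\star)\mathbf{h}_1\mathbf{h}_1^\dag]\le 1$. But the regime of interest --- e.g.\ the boundary point $(R_{\mathrm{EH}},Q_{\max})$, where the EH constraint binds by construction --- is exactly where $\mu_4>0$ is possible, and there dual feasibility $\mathbf{Z}\succeq 0$ does not force strict inequality: the PSD terms $\mu_1\rho_1\mathbf{H}_1\mathbf{H}_1^\dag+\mu_2\mathbf{h}_2\mathbf{h}_2^\dag$ can absorb negative directions of $\mathbf{M}\otimes(\mu_5\mathbf{I}_K-\mu_4\mathbf{h}_E\mathbf{h}_E^\dag)$, and in the borderline case $\mu_5=\mu_4\lVert\mathbf{h}_E\rVert^2$ that Kronecker factor has the $K$-dimensional kernel $\mathbb{C}^K\otimes\mathrm{span}(\mathbf{h}_E)$, so your rank bound collapses. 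Two further soft spots: tightness of the relay power constraint does not by itself give $\mu_5>0$ (complementary slackness runs in the opposite direction), and your fallback ``standard rank-reduction routine'' only guarantees, for a complex SDP with $m=5$ trace constraints, an optimizer of rank $r$ with $r^2\le m$, i.e.\ $r\le 2$ --- purification alone cannot certify rank one without a problem-specific construction preserving all five traces. Finally, note the paper separately dispatches ${\bf h}_S\parallel{\bf h}_J$ by merging the two SINR-related constraints into one and invoking a rank result for a four-constraint SDP; your uniform KKT argument should either absorb or flag this degeneracy, since there $\mathbf{h}_1\parallel\mathbf{h}_2$ and the terms $(1+\mu_3)\mathbf{h}_1\mathbf{h}_1^\dag$ and $\mu_2\mathbf{h}_2\mathbf{h}_2^\dag$ in $\mathbf{Z}$ no longer contribute with independent signs.
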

\begin{proof}
The proof of this theorem is relegated to appendix B.
\end{proof}

Likewise, \eqref{case2:28} and \eqref{case2:30} can be converted to relevant standard convex SDP problems, which are similar to \eqref{case2:19}. Their specific forms can be shown as
\begin{align}
\label{case2:70}
\underset{\mathbf{X},b^2,c^2}{\max}\ \
&tr({{\breve{\bf{h}}_1}{\breve{\bf{h}}_1}^\dag \mathbf{X}}) \nonumber \\
\text{s.t.} \ \ &{\rho_0}{c^2}+tr({\rho_1}{\breve{\bf{H}}_3}{\breve{\bf{H}}_3}^\dag \mathbf{X})+{\rho_2}{b^2}\le1, \mathbf{X}\succeq 0 \nonumber\\
&tr({\breve{\bf{h}}_2}{\breve{\bf{h}}_2}^\dag \mathbf{X})\le{c^2},tr(\breve{{\bf{h}}_1}{\breve{\bf{h}}_1}^\dag \mathbf{X})\ge \frac{{\rho_0} c^2}{\varepsilon}\\
&tr[({\breve{\bf{h}}_3}{\breve{\bf{h}}_3}^\dag{P_S}+{\breve{\bf{h}}_4}{\breve{\bf{h}}_4}^\dag{P_J}+{\breve{\bf{H}}_4}{\breve{\bf{H}}_4}^\dag)\mathbf{X}]\ge {Q}{b^2} \nonumber \\
&tr({\breve{\mathbf{\Phi}}} \mathbf{X})\le {P_{R,\max}}{b^2}.\nonumber
\end{align}
where, ${\breve{\bf{h}}_1}=({\hat{\bf{h}}_1},0)^T$, ${\breve{\bf{h}}_2}=({\hat{\bf{h}}_2},0)^T$, ${\breve{\bf{h}}_3}=({\hat{\bf{h}}_3},0)^T$, ${\breve{\bf{h}}_4}=({\hat{\bf{h}}_4},0)^T$, ${{\breve{\bf{H}}_3}}= \begin{bmatrix}{{\hat{\bf{H}}_3}} & \bf{0}\\ \bf{0} & {{\hat{\bf{H}}_3}} \end{bmatrix}$, ${{\breve{\bf{H}}_4}}= \begin{bmatrix}{{\hat{\bf{H}}_4}} & \bf{0}\\ \bf{0} & {{\hat{\bf{H}}_4}} \end{bmatrix}$, ${\breve{\mathbf{\Phi}}}= \begin{bmatrix}{\hat{\mathbf{\Phi}}} & 0\\ 0 & \bf{I} \end{bmatrix}$. Besides, the optimal problem of \eqref{case2:30} can be represented as
\begin{align}
\label{case2:71}
\underset{\mathbf{X},b^2,c^2}{\max}\ \
&tr({{\acute{\bf{h}}_1}{\acute{\bf{h}}_1}^\dag \mathbf{X}}) \nonumber \\
\text{s.t.} \ \ &{\rho_0}{c^2}+tr({\rho_1}{\acute{\bf{H}}_1}{\acute{\bf{H}}_1}^\dag \mathbf{X})+{\rho_2}{b^2}\le1, \mathbf{X}\succeq 0 \nonumber\\
&tr({\acute{\bf{h}}_2}{\acute{\bf{h}}_2}^\dag \mathbf{X})\le{c^2},tr(\acute{{\bf{h}}_1}{\acute{\bf{h}}_1}^\dag \mathbf{X})\ge \frac{{\rho_0} c^2}{\varepsilon}\\
&tr[({\acute{\bf{h}}_3}{\acute{\bf{h}}_3}^\dag{P_S}+{\acute{\bf{h}}_4}{\acute{\bf{h}}_4}^\dag{P_J}+{\acute{\bf{H}}_4}{\acute{\bf{H}}_4}^\dag)\mathbf{X}]\ge {Q}{b^2} \nonumber \\
&tr[({\acute{\bf{H}}_2}{\acute{\bf{H}}_2}^\dag{P_S}+{\acute{\bf{H}}_3}{\acute{\bf{H}}_3}^\dag{P_J}+{\acute{\bf{I}}})\mathbf{X}]\le {P_{R,\max}}{b^2}.\nonumber
\end{align}
where, ${\acute{\bf{h}}_1}=({\tilde{\bf{h}}_1},0)^T$, ${\acute{\bf{h}}_2}=({\tilde{\bf{h}}_2},0)^T$, ${\acute{\bf{h}}_3}=({\tilde{\bf{h}}_3},0)^T$, ${\acute{\bf{h}}_4}=({\tilde{\bf{h}}_4},0)^T$, ${{\acute{\bf{H}}_1}}= \begin{bmatrix}{{\tilde{\bf{H}}_1}} & \bf{0}\\ \bf{0} & {{\tilde{\bf{H}}_1}} \end{bmatrix}$, ${{\acute{\bf{H}}_2}}= \begin{bmatrix}{{\tilde{\bf{H}}_2}} & \bf{0}\\ \bf{0} & \bf{0} \end{bmatrix}$,${{\acute{\bf{H}}_3}}= \begin{bmatrix}{{\tilde{\bf{H}}_3}} & \bf{0}\\ \bf{0} & \bf{0} \end{bmatrix}$, ${{\acute{\bf{H}}_4}}= \begin{bmatrix}{{\tilde{\bf{H}}_4}} & \bf{0}\\ \bf{0} & {{\tilde{\bf{H}}_4}} \end{bmatrix}$, ${{\acute{\bf{I}}}}= \begin{bmatrix}{{\tilde{\bf{I}}}} & \bf{0}\\ \bf{0} & {{\tilde{\bf{I}}}} \end{bmatrix}$.

\subsection{Achievable Rate-Energy Region with a jammer}
Since that \eqref{case2:19}, \eqref{case2:70} and \eqref{case2:71} are equivalent, only \eqref{case2:19} is used here as an example. The research on the achievable rate-energy region is available to analyze the tradeoffs in the transmission of signal and energy with the EH and sum power constraints \cite{t12}. In this scenario, the achievable rate-energy region can be defined as
\begin{equation}
\label{case2:58}
\mathcal{C}\triangleq \left\{\begin{aligned}(R, Q): &0\le R\le \frac{1}{2}\log(1+\text{SINR}), 0\le Q\le Q_{max},\\
&\mathbb{E}[|{\mathbf{x} _R}{|^2}]\le{P_{R, \max }}
\end{aligned}\right\}.
\end{equation}

To characterize the rate-energy region, we examine three boundary points, denoted by $(R_{\text{EH}}, Q_{\max})$, $(0, Q_{\max})$ and $(R_{\max}, 0)$, where $R_{\max}$ is the maximum allowable relay transmission rate regardless of the work state of the EH receiver, $ Q_{\max}$ refers to the maximum harvesting energy that the EH node can be obtained regardless of the transmission rate and $R_{\text{EH}}$ represents the maximum transmission rate when $Q=Q_{\max}$.

Calculating $R_{\max}$, by definition, the EH constraint can be removed, and we have that
\begin{align}
\label{case2:59}
\underset{\bm{\alpha}}{\max}\ \
&\frac{{\bm{\alpha}}^\dag{{\bf{h}}_1}{{\bf{h}}_1}^\dag{\bm{\alpha}}{P_S}}{{\bm{\alpha}}^\dag{{\bf{h}}_2}{{\bf{h}}_2}^\dag{\bm{\alpha}}{P_J}+{\bm{\alpha}}^\dag{{\bf{H}}_1}{{\bf{H}}_1}^\dag{\bm{\alpha}}{{\sigma}_R}^2+{{{\sigma}_D}^2}}\nonumber \\
\text{s.t.} \ \ &{\bm{\alpha}}^\dag \mathbf{\Sigma}{\bm{\alpha}}\le P_{R,\max}\nonumber\\
&\varepsilon{\bm{\alpha}}^\dag{{\bf{h}}_1}{{\bf{h}}_1}^\dag{\bm{\alpha}}{P_S}\ge {\bm{\alpha}}^\dag{{\bf{h}}_2}{{\bf{h}}_2}^\dag{\bm{\alpha}}{P_J}.
\end{align}

Similar to the measure used in subsection $A$, let $\mathbf{X}=\bm{\alpha}\bm{\alpha}^\dag$, \eqref{case2:59} can be converted to
\begin{align}
\label{case2:61}
\underset{\mathbf{X},b^2,c^2}{\max}\ \
&tr({{{\bf{h}}_1}{{\bf{h}}_1}^\dag \mathbf{X}}) \nonumber \\
\text{s.t.} \ \ &{\rho_0}{c^2}+tr({\rho_1}{{\bf{H}}_1}{{\bf{H}}_1}^\dag \mathbf{X})+{\rho_2}{b^2}\le1,tr({{\bf{h}}_2}{{\bf{h}}_2}^\dag \mathbf{X})\le{c^2}\nonumber\\
&tr[({{\bf{H}}_2}{{\bf{H}}_2}^\dag{P_S}+{{\bf{H}}_3}{{\bf{H}}_3}^\dag{P_J}+{\sigma_R}^2{{\bf{I}}})\mathbf{X}]\le {P_{R,\max}}{b^2}\nonumber\\
&tr({{\bf{h}}_1}{{\bf{h}}_1}^\dag \mathbf{X})\ge \frac{{\rho_0} c^2}{\varepsilon},rank(\mathbf{X})\le1.
\end{align}
where, ${{\bf{h}}_1}={{\bf{h}}_S}^\ast\otimes {{\bf{h}}_D}$, ${{\bf{h}}_2}={{\bf{h}}_J}^\ast\otimes {{\bf{h}}_D}$,  ${{\bf{H}}_1}={{\bf{I}}}\otimes {{\bf{h}}_D}$, ${{\bf{H}}_2}={{\bf{h}}_S}^\ast\otimes {{\bf{I}}}$, ${{\bf{H}}_3}={{\bf{h}}_J}^\ast\otimes {{\bf{I}}}$, and when $\varepsilon \in [0,1)$ approaches 1, it does not affect global optimality. Clearly, as the proof procedure of subsection $A$, the rank-1 constraint can be relaxed and \eqref{case2:61} is equivalent to the original problem.

Regardless of the transmission rate, $Q_{\max}$ can be obtained by
\begin{align}
\label{case2:62}
\underset{\bm{\alpha}}{\max}\ \
&{\bm{\alpha}}^\dag{{\bf{h}}_3}^\ast{{\bf{h}}_3}^T{\bm{\alpha}}{P_S}+{\bm{\alpha}}^\dag{{\bf{h}}_4}^\ast{{\bf{h}}_4}^T{\bm{\alpha}}{P_J}+{\bm{\alpha}}^\dag{{\bf{H}}_4}^\dag{{\bf{H}}_4}{\bm{\alpha}}{\sigma_R}^2\nonumber\\
\text{s.t.} \ \ & {\bm{\alpha}}^\dag \mathbf{\Sigma} {\bm{\alpha}} \le P_{R,\max}.
\end{align}

Meanwhile, let $\mathbf{X}=\bm{\alpha}\bm{\alpha}^\dag$, and we have the following optimal problem transformed from \eqref{case2:62}.
\begin{align}
\label{case2:63}
\underset{\mathbf{X}}{\max}\ \
&tr[({{\bf{h}}_3}{{\bf{h}}_3}^\dag{P_S}+{{\bf{h}}_4}{{\bf{h}}_4}^\dag{P_J}+{{\bf{H}}_4}{{\bf{H}}_4}^\dag{\sigma_R}^2)\mathbf{X}] \nonumber \\
\text{s.t.} \ \
&tr(\mathbf{\Sigma}\mathbf{X})\le {P_{R,\max}}\nonumber \\
&rank(\mathbf{X})\le1\nonumber \\
&\mathbf{X}\succeq 0.
\end{align}

According to the literature \cite{t13}, the rank-1 constraint can be relaxed. Therefore, the problem is an SDP problem.

For the optimal problem of point $(R_{\text{EH}}, Q_{\max})$, in the light of the preceding analysis, it can be provided as
\begin{align}
\label{case2:64}
\underset{\bm{\alpha}}{\max}\ \
&\frac{{\bm{\alpha}}^\dag{{\bf{h}}_1}{{\bf{h}}_1}^\dag{\bm{\alpha}}{P_S}}{{\bm{\alpha}}^\dag{{\bf{h}}_2}{{\bf{h}}_2}^\dag{\bm{\alpha}}{P_J}+{\bm{\alpha}}^\dag{{\bf{H}}_1}{{\bf{H}}_1}^\dag{\bm{\alpha}}{{\sigma}_R}^2+{{{\sigma}_D}^2}}\nonumber \\
\text{s.t.} \ \ &{\bm{\alpha}}^\dag \mathbf{\Sigma}{\bm{\alpha}}\le P_{R,\max}\nonumber\\
&\varepsilon{\bm{\alpha}}^\dag{{\bf{h}}_1}{{\bf{h}}_1}^\dag{\bm{\alpha}}{P_S}\ge {\bm{\alpha}}^\dag{{\bf{h}}_2}{{\bf{h}}_2}^\dag{\bm{\alpha}}{P_J}\nonumber \\
&{\bm{\alpha}}^\dag{{\bf{h}}_3}{{\bf{h}}_3}^\dag{\bm{\alpha}}{P_S}+{\bm{\alpha}}^\dag{{\bf{h}}_4}{{\bf{h}}_4}^\dag{\bm{\alpha}}{P_J}+{\bm{\alpha}}^\dag{{\bf{H}}_4}{{\bf{H}}_4}^\dag{\bm{\alpha}}{{\sigma}_R}^2\ge Q_{\max}.
\end{align}

In view of the work in subsection $A$, \eqref{case2:64} can be transformed into the following SDP problem
\begin{align}
\label{case2:65}
\underset{\mathbf{X},b^2,c^2}{\max}\ \
&tr({{{\bf{h}}_1}{{\bf{h}}_1}^\dag \mathbf{X}}) \nonumber \\
\text{s.t.} \ \ &{\rho_0}{c^2}+tr({\rho_1}{{\bf{H}}_1}{{\bf{H}}_1}^\dag \mathbf{X})+{\rho_2}{b^2}\le1\nonumber\\
&tr({{\bf{h}}_2}{{\bf{h}}_2}^\dag \mathbf{X})\le{c^2},tr({{\bf{h}}_1}{{\bf{h}}_1}^\dag \mathbf{X})\ge \frac{{\rho_0} c^2}{\varepsilon}\\
&tr[({{\bf{h}}_3}{{\bf{h}}_3}^\dag{P_S}+{{\bf{h}}_4}{{\bf{h}}_4}^\dag{P_J}+{{\bf{H}}_4}{{\bf{H}}_4}^\dag)\mathbf{X}]\ge {Q_{\max}}{b^2} \nonumber \\
&tr(\mathbf{\Sigma} \mathbf{X})\le {P_{R,\max}}{b^2},\mathbf{X}\succeq 0.\nonumber
\end{align}

\subsection{Closed-form expression and suboptimal solution}
In this subsection, in order to analyze the rate-energy region, the closed-form and suboptimal solution of the two points, $(R_{\max}, 0)$ and $(0, Q_{\max})$, should be taken into consideration accordingly.

\subsubsection{$(R_{\max}, 0)$}
We consider the optimal problem \eqref{case2:59} over $R_{\max}$ at first. Since the second constraint in \eqref{case2:59} should be addressed as ratio to ensure the lower error, we discuss two cases, ${{\bf{h}}_S} \parallel {{\bf{h}}_J}$ and ${{\bf{h}}_S} \nparallel {{\bf{h}}_J}$.
\paragraph{${{\bf{h}}_S} \parallel {{\bf{h}}_J}$}
Suppose that $\mathbf{h}_J=\rho \mathbf{h}_S$, then, there is $\mathbf{h}_2=\rho \mathbf{h}_1$. Because $\frac{\mathbb{E}[|{x_{J,eq}}{|^2}]}{\mathbb{E}[|{x_{S,eq}}{|^2}]}$ is established as
\begin{align}
\label{case2:31}
&\frac{\mathbb{E}[|{x_{J,eq}}{|^2}]}{\mathbb{E}[|{x_{S,eq}}{|^2}]}=\frac{|{{\bf{h}}_D}^\dag{\bf{A}}{{\bf{h}}_J}{|^2}{P_J}}{|{{\bf{h}}_D}^\dag{\bf{A}}{{\bf{h}}_S}{|^2}{P_S}}=\frac{|{{\bf{h}}_D}^\dag{\bf{A}}{{\bf{h}}_S}{|^2}|{\rho}{^2}{P_J}}{|{{\bf{h}}_D}^\dag{\bf{A}}{{\bf{h}}_S}{|^2}{P_S}}=\frac{|{\rho}{|^2}P_J}{P_S}.
\end{align}
considering \eqref{capacity}, it is easy to know that when $\frac{|{\rho}{|^2}P_J}{P_S}\ge1$, the deterministic coding capacity of the equivalent GAVC(A) is zero.

For the same reason, when $\frac{|{\rho}{|^2}P_J}{P_S}<1$, the second constraint in \eqref{case2:59} holds. Hence, for the basic definition of convex problem, the objective function of \eqref{case2:59} which is  non-convex should be investigated. From the physical point of view, if ${\bm{\alpha}}^\dag \mathbf{\Sigma} {\bm{\alpha}} = P_{R,\max}$ is held and the energy harvesting node is not working, the maximum value of the objective function is acquired.

Substituting $\frac{{\bm{\alpha}}^\dag \mathbf{\Sigma} {\bm{\alpha}}}{P_{R,\max}} =1$ and $\mathbf{h}_2=\rho \mathbf{h}_1$ into optimization problem \eqref{case2:59}, it can be simplified as
\begin{align}
\label{case2:32}
\underset{\bm{\alpha}}{\max}\ \
&\frac{{\bm{\alpha}}^\dag{{\bf{h}}_1}^\ast{{\bf{h}}_1}^T{\bm{\alpha}}{P_S}}{{\bm{\alpha}}^\dag({{\bf{h}}_1}^\ast{{\bf{h}}_1}^T|{\rho}{|^2}{P_J}+{{\bf{H}}_3}^\dag{{\bf{H}}_3}+\frac{\mathbf{\Sigma}}{P_{R,\max}}){\bm{\alpha}}}.
\end{align}

Given that the matrix ${{\bf{h}}_1}^\ast{{\bf{h}}_1}^T|{\rho}{|^2}{P_J}+{{\bf{H}}_3}^\dag{{\bf{H}}_3}+\frac{ \mathbf{\Sigma}}{P_{R,\max}}$ is a positive definite matrix, it can be known from the existing conclusion that when $f(\bm{a})=\frac{\bm{a}^\dag \mathbf{h}\mathbf{h}^\dag \bm{a}}{\bm{a}^\dag \mathbf{P} \bm{a}}$, the maximum value of the function $f(\bm{a})$ can be expressed as $ \mathbf{h}^\dag \mathbf{P}^{-1}\mathbf{h} $, and the corresponding vector solution can be written as $\bm{a}=\kappa\mathbf{P}^{-1}\mathbf{h}$, where $\kappa$ is an arbitrary constant \cite{c14}. Therefore, under ${{\bf{h}}_S} \parallel {{\bf{h}}_J}$, the maximum value of the objective function of \eqref{case2:59} is as follows
\begin{align}
\label{case2:33}
\mathbf{P}=[{{\bf{h}}_1}^\ast{{\bf{h}}_1}^T|{\rho}{|^2}{P_J}+{{\bf{H}}_3}^\dag{{\bf{H}}_3}+\frac{\mathbf{\Sigma}}{P_{R,\max}}]^{-1}.
\end{align}
\begin{align}
\label{case2:34}
\text{SINR}={{\bf{h}}_1}^T\mathbf{P}{{\bf{h}}_1}^\ast{P_S}.
\end{align}
\begin{align}
\label{case2:35}
{\bm{\alpha}}=\kappa\mathbf{P}{{\bf{h}}_1}^\ast{P_S}.
\end{align}
\begin{align}
\label{case2:36}
&\kappa=e^{j\theta}\sqrt{\frac{{P_{R,\max}}}{\lVert\mathbf{\Sigma}^{\frac{1}{2}}\mathbf{P} \mathbf{h}_1^\ast\lVert^2}}.
\end{align}
where, $\theta$ is an arbitrary angle.

\paragraph{${{\bf{h}}_S} \nparallel {{\bf{h}}_J}$}
Assuming that $\mathbf{h}_S \nparallel \mathbf{h}_J$, then, there is $\mathbf{h}_1 \nparallel \mathbf{h}_2$. Let $\bm{W}$ be a matrix of eigenvectors corresponding to the zero eigenvalues of matrix $\mathbf{h}_J^\ast\mathbf{h}_J^{T}$.  Since ${\bm{W}}$ is not full rank, this solution is not optimal. If $\bm{\alpha}=\bm{W}\bm{f}$ is assumed, we have
\begin{align}
\label{case2:37}
&\bm{\alpha}^\dag\mathbf{h}_2^\ast\mathbf{h}_2^{T}\bm{\alpha}=\bm{f}^\dag\bm{W}^\dag\mathbf{h}_2^\ast\mathbf{h}_2^{T}\bm{W}\bm{f}=0.
\end{align}
\begin{align}
\label{case2:38}
&\bm{\alpha}^\dag\mathbf{h}_1^\ast\mathbf{h}_1^{T}\bm{\alpha}=\bm{f}^\dag\bm{W}^\dag\mathbf{h}_1^\ast\mathbf{h}_1^{T}\bm{W}\bm{f}>0.
\end{align}

Similar to the analysis at ${{\bf{h}}_S} \parallel {{\bf{h}}_J}$, from the physical point of view, substituting $\frac{{\bm{\alpha}}^\dag \mathbf{\Sigma} {\bm{\alpha}}}{P_{R,\max}} =1$ and $\bm{\alpha}=\bm{W}\bm{f}$ into \eqref{case2:59}, it can be transformed into the following form
\begin{align}
\label{case2:39}
\underset{\bm{f}}{\max}\ \ &\frac{\bm{f}^\dag\bm{W}^\dag\mathbf{h}_1^\ast\mathbf{h}_1^T\bm{W}\bm{f}}{\bm{f}^\dag\bm{W}^\dag(\mathbf{H}_3^\dag\mathbf{H}_3+\frac{\mathbf{\Sigma}}{P_{R, \max }})\bm{W}\bm{f}}.
\end{align}

Evidently, the optimization problem can be equivalent to solve the optimal value of the function. Considering that the matrix ${\bm{W}^\dag}(\mathbf{H}_3^\dag\mathbf{H}_3+\frac{\mathbf{\Sigma}}{P_{R, \max }}){\bm{W}}$ is a positive definite matrix, then the solving method is similar to ${{\bf{h}}_S} \parallel {{\bf{h}}_J}$. According to the existing conclusion of the positive definite matrix, the suboptimal solution of the optimization problem is that
\begin{align}
\label{case2:40}
\mathbf{P}=[{\bm{W}^\dag}(\mathbf{H}_3^\dag\mathbf{H}_3+\frac{\mathbf{\Sigma}}{P_{R, \max }}){\bm{W}}]^{-1}.
\end{align}
\begin{align}
\label{case2:41}
{\text{SINR}_{sub}}={{\bf{h}}_1}^T{\bm{W}}\mathbf{P}{\bm{W}^\dag}{{\bf{h}}_1}^\ast.
\end{align}

The corresponding vector of the solution is
\begin{align}
\label{case2:42}
{\bm{f}}=\kappa\mathbf{P}{\bm{W}^\dag}{{\bf{h}}_1}^\ast.
\end{align}
\begin{align}
\label{case2:43}
&\kappa=e^{j\theta}\sqrt{\frac{{P_{R,\max}}}{\lVert\mathbf{\Sigma}^{\frac{1}{2}}\mathbf{P} {\bm{W}^\dag}{{\bf{h}}_1}^\ast\lVert^2}}.
\end{align}
where, $\theta$ is an arbitrary angle.

The suboptimal solution of the original optimization problem is $\bm{W}\bm{f}$. The relative expression of it can be presented as
\begin{align}
\label{case2:44}
&\bm{\alpha}=\bm{W}\bm{f}=\kappa\bm{W}\mathbf{P}{\bm{W}^\dag}{{\mathbf{h}}_1}^\ast.
\end{align}
\begin{align}
\label{case2:45}
&\mathbf{A}={\mathbf{U}_1}^\ast({{vec^{-1}(\bm{b})}})^T{\mathbf{U}_1}^\dag.
\end{align}

\subsubsection{$(0,Q_{\max})$}
Similar to the measure used in $(R_{\max}, 0)$, we again discuss two cases, ${{\bf{h}}_S} \parallel {{\bf{h}}_J}$ and ${{\bf{h}}_S} \nparallel {{\bf{h}}_J}$. Technically, the optimal value of energy harvesting is achievable, if the power of the relay is only used to transform energy. Based on this, the constraint of the transmission rate is not considered. Then, the optimal problem over $Q_{\max}$ can be shown as \eqref{case2:62}. The detailed process will be described in the following.

\paragraph{${{\bf{h}}_S} \parallel {{\bf{h}}_J}$}
Assuming that ${{\bf{h}}_J}={\rho}{{\bf{h}}_S}$, then, ${{\bf{h}}_2}={\rho}{{\bf{h}}_1}$ and ${{\bf{h}}_4}={\rho}{{\bf{h}}_3}$ can be obtained. In fact, if the constraint in \eqref{case2:62} holds, the objective function in \eqref{case2:62} can take the maximum value. Substituting the conditions $\frac{{\bm{\alpha}}^\dag \mathbf{\Sigma} {\bm{\alpha}}}{P_{R,\max}} =1$, ${{\bf{h}}_2}={\rho}{{\bf{h}}_1}$ and $\mathbf{h}_4=\rho\mathbf{h}_3$ into \eqref{case2:62}, thus, there is the following expression
\begin{align}
\label{case2:48}
\underset{\bm{\alpha}}{\max}\ \
&{\bm{\alpha}}^\dag{{\bf{h}}_3}^\ast{{\bf{h}}_3}^T{\bm{\alpha}}{P_S}+{\bm{\alpha}}^\dag{{\bf{h}}_3}^\ast{{\bf{h}}_3}^T{\bm{\alpha}}{P_J}{\rho}^2+{\bm{\alpha}}^\dag{{\bf{H}}_4}^\dag{{\bf{H}}_4}{\bm{\alpha}}{\sigma_R}^2\nonumber\\
\text{s.t.} \ \ & {\bm{\alpha}}^\dag \mathbf{\Sigma} {\bm{\alpha}} = P_{R,\max}.
\end{align}

Solving \eqref{case2:48}, several conclusions according to\cite{t5} can be given
\begin{align}
\label{case2:49}
\boldsymbol{\Gamma}={{\bf{h}}_3}^\ast{{\bf{h}}_3}^T({P_S}+{\rho}^2{P_J})+{{\bf{H}}_4}^\dag{{\bf{H}}_4}{\sigma_R}^2.
\end{align}
\begin{align}
\label{case2:50}
{\bm{\alpha}}= \sqrt{P_{R,\max}}(\lVert\mathbf{\Sigma}^{\frac{1}{2}}\Psi_{max}(\mathbf{\Sigma},\boldsymbol{\Gamma}){\rVert})^{-1}\Psi_{max}(\mathbf{\Sigma},\boldsymbol{\Gamma}).
\end{align}
where, $\Psi_{max}(\bf A,B)$ denotes the generalized eigenvector corresponding to the largest generalized eigenvalue of the matrix pair $(\bf{A},\bf{B})$. Thus, the optimal value $Q$ of the object function can be expressed as
\begin{align}
\label{case2:51}
Q= \lambda_{max}(\mathbf{\Sigma}^{-1}\boldsymbol{\Gamma}) P_{R,\max}.
\end{align}
where, $\lambda_{max}(\bf{D})$ denotes the largest eigenvalue of matrix $\bf{D}$.

\paragraph{${{\bf{h}}_S} \nparallel {{\bf{h}}_J}$}
Let ${{\bf{h}}_S} \nparallel {{\bf{h}}_J}$, then, ${{\bf{h}}_1} \nparallel {{\bf{h}}_2}$ and ${{\bf{h}}_3} \nparallel {{\bf{h}}_4}$ can be gotten. Assuming that ${\bm{W}}$ is a matrix of eigenvectors corresponding to the zero eigenvalues of matrix ${{\bf{h}}_1}^\ast{{\bf{h}}_1}^T$, hence, we have
\begin{align}
\label{case2:52}
&\bm{\alpha}^\dag\mathbf{h}_1^\ast\mathbf{h}_1^{T}\bm{\alpha}=\bm{f}^\dag\bm{W}^\dag\mathbf{h}_1^\ast\mathbf{h}_1^{T}\bm{W}\bm{f}=0
\end{align}
\begin{align}
\label{case2:53}
&\bm{\alpha}^\dag\mathbf{h}_2^\ast\mathbf{h}_2^{T}\bm{\alpha}=\bm{f}^\dag\bm{W}^\dag\mathbf{h}_2^\ast\mathbf{h}_2^{T}\bm{W}\bm{f}>0
\end{align}

From the physical point of view, when ${\bm{\alpha}}^\dag \mathbf{\Phi} {\bm{\alpha}} = P_{R,\max}$, considering ${\bm{W}}$ is not full rank, the objective function in \eqref{case2:62} takes the suboptimal solution.

Substituting $\frac{{\bm{\alpha}}^\dag \mathbf{\Sigma} {\bm{\alpha}}}{P_{R,\max}} =1$ and $\bm{\alpha}=\bm{W}\bm{f}$ into \eqref{case2:62}, then the optimization problem can be formulated into the following form
\begin{align}
\label{case2:54}
\underset{\bm{f}}{\max}\ \
&{\bm{f}}^\dag{\bm{W}}^\dag[{{\bf{h}}_3}^\ast{{\bf{h}}_3}^T{P_S}+{{\bf{h}}_4}^\ast{{\bf{h}}_4}^T{P_J}+{{\bf{H}}_4}^\dag{{\bf{H}}_4}{\sigma_R}^2]{\bm{W}}{\bm{f}}\nonumber\\
\text{s.t.} \ \ & {\bm{f}}^\dag{\bm{W}}^\dag \mathbf{\Sigma} {\bm{W}}{\bm{f}} = P_{R,\max}
\end{align}

By analyzing \eqref{case2:54}, the following conclusions can be drawn
\begin{align}
\label{case2:55}
\mathbf{T}={{\bf{h}}_3}^\ast{{\bf{h}}_3}^T{P_S}+{{\bf{h}}_4}^\ast{{\bf{h}}_4}^T{P_J}+{{\bf{H}}_4}^\dag{{\bf{H}}_4}{\sigma_R}^2
\end{align}
\begin{align}
\label{case2:56}
{\bm{\alpha}}= \sqrt{P_{R,\max}}(\lVert \mathbf{\Sigma}^{\frac{1}{2}}\Psi_{max}(\mathbf{\Sigma},\mathbf{T}){\rVert})^{-1}\Psi_{max}(\mathbf{\Sigma},\mathbf{T})
\end{align}
where, $\Psi_{max}(\bf A,B)$ denotes the generalized eigenvector corresponding to the largest generalized eigenvalue of the matrix pair $(\bf{A},\bf{B})$. Thus, the optimal value $Q$ of the object function can be exhibited as
\begin{align}
\label{case2:57}
Q= \lambda_{max}(\mathbf{\Sigma}^{-1}\mathbf{T}) P_{R,\max}
\end{align}
where $\lambda_{max}(\bf{D})$ denotes the largest eigenvalue of matrix $\mathbf{D}$.

\section{Numerical Results}\label{simulations}
In this section, we illustrate several numerical results to demonstrate the effectiveness of our proposed scheme. All simulations are performed in MATLAB R2015b. We use CVX toolbox \cite{t14} to solve the SDP problems. Suppose that channel coefficients ${{\bf{h}}_S}$, ${{\bf{h}}_J}$ , ${{\bf{h}}_D}$ and ${\mathbf{h}_E}$ are generated by independent complex Gaussian random variables with the distribution $\mathcal{CN}\sim\left( {0,1} \right)$. The variances of noise are $\sigma_R^2=\sigma_D^2=1$. The transmission power at jammer is $P_J=15\text{dBW}$ in Fig. \ref{figure1} and the relay has a power budget $P_{R_{\max}}$. In all simulations, we set $\varepsilon  = 0.99$, and the number of channel samples is set to 1000 so as to analyze the average performance of the proposed schemes.

In Fig. \ref{figure2}, we compare the anti-jamming performance of some existing schemes and the scheme proposed in this paper with the sum power constraint. 1) No jammer, we calculate the optimal beamforming matrix in EHWSN without jammer; 2) Pseudo matched forwarding (PMF), the beamforming matrix is chosen as $\mathbf{A}=\mu{\mathbf{h}_{D}}^\ast{\mathbf{h}_{S}}^\dag$, where $\mu=\sqrt{\frac{P_{R,max}}{\lVert{\mathbf{h}_{D}}\rVert^2(\lVert{\mathbf{h}_{S}}\rVert^4 P_S+|{\mathbf{h}_{S}}^\dag{\mathbf{h}_J}{|}^2 P_J+\lVert{\mathbf{h}_{S}}\rVert^2)}}$; 3) Zero-forcing (ZF) beamforming, the beamforming matrix is generated by $\mathbf{A}=\tau{\mathbf{H}_{\bot}}$, where $\tau=\sqrt{\frac{P_{R,max}}{K(|{\mathbf{h}_{\bot}}{\mathbf{h}_S}{|}^2+1)}}$ and ${\mathbf{H}_{\bot}}$ is a matrix in which each row is ${\mathbf{h}_{\bot}}$. ${\mathbf{h}_{\bot}}$ is randomly selected from the null space of ${\rm span}({\bf{h}}_J)$; 4) Direct relaying (DR), the beamforming matrix is given by $\mathbf{A}=\xi{\mathbf{I}_K}$, where $\xi=\sqrt{\frac{P_{R,max}}{\lVert{\mathbf{h}_{S}}\rVert^2 P_S+\lVert{\mathbf{h}_{J}}\rVert^2 P_J+K}}$. Clearly, although the zero-forcing scheme has good anti-jamming ability, it is still in a disadvantaged situation compared with the scheme proposed in this paper. Other schemes gradually tend to zero as the jamming source power increases. For this reason, the anti-jamming performance of the proposed scheme is better than others.
\begin{figure}[!t]
	\centering
	\includegraphics[width=2.5in]{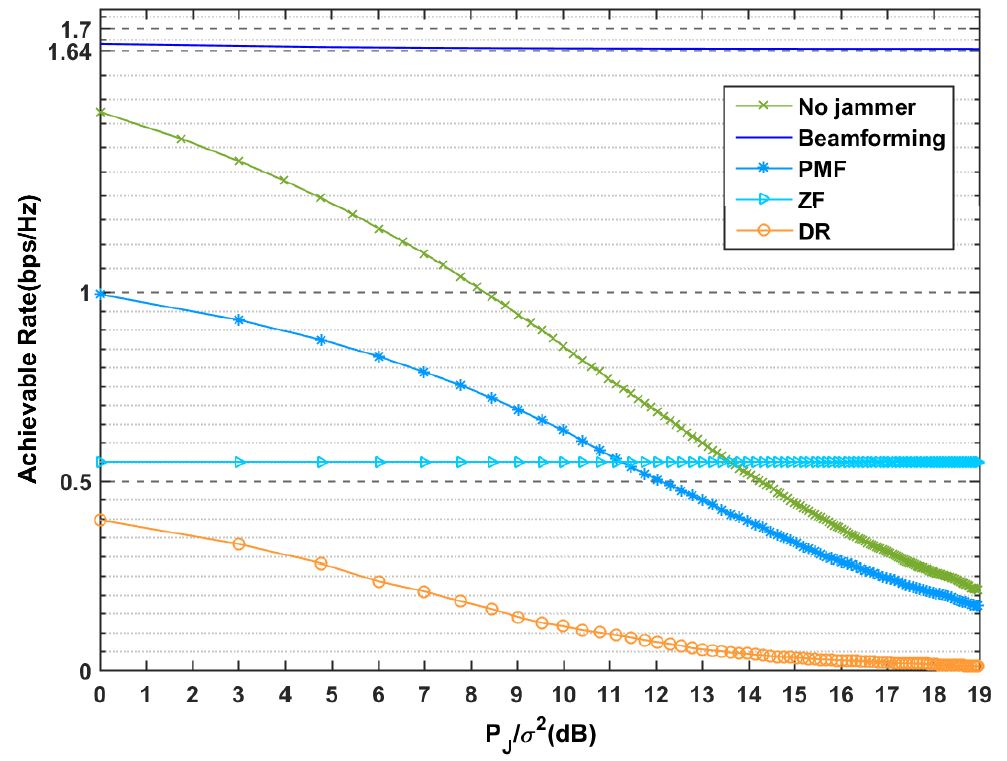}
	\caption{Comparison of the achievable rate with $P_J$ under different schemes.}
    \label{figure2}
\end{figure}
\begin{figure}[htbp]
\begin{minipage}[t]{0.5\linewidth}
\centering
	\includegraphics[width=2.5in]{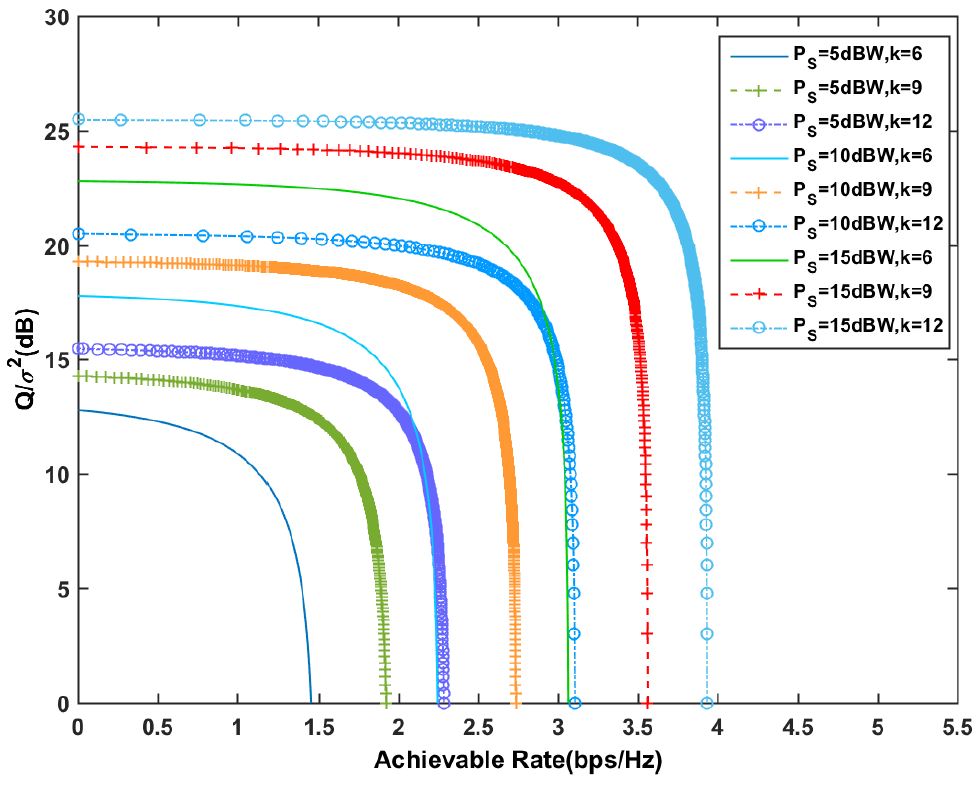}
	\caption{Achievable rate-energy regions for the AF relay network.}
\label{figure1}
\end{minipage}%
\begin{minipage}[t]{0.5\linewidth}
\centering
	\includegraphics[width=2.5in]{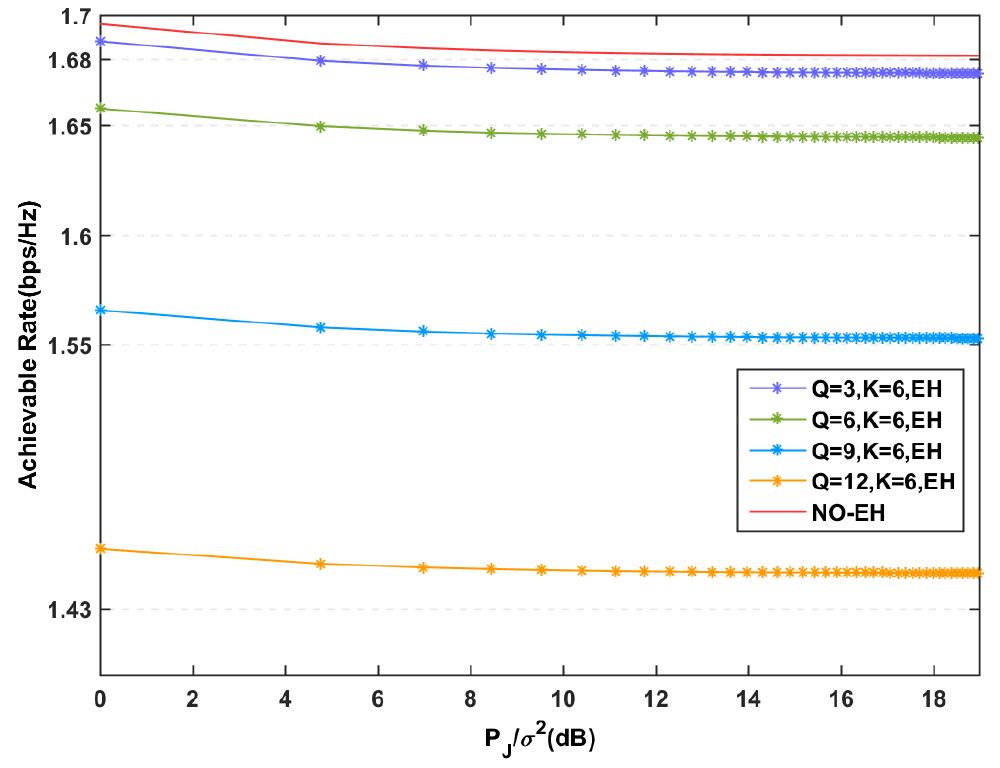}
	\caption{Achievable rate with the increasing $P_J$ under different Q.}
\label{figure3}
\end{minipage}
\end{figure}
\begin{figure}[htbp]
\begin{minipage}[t]{0.5\linewidth}
\centering
	\includegraphics[width=2.5in]{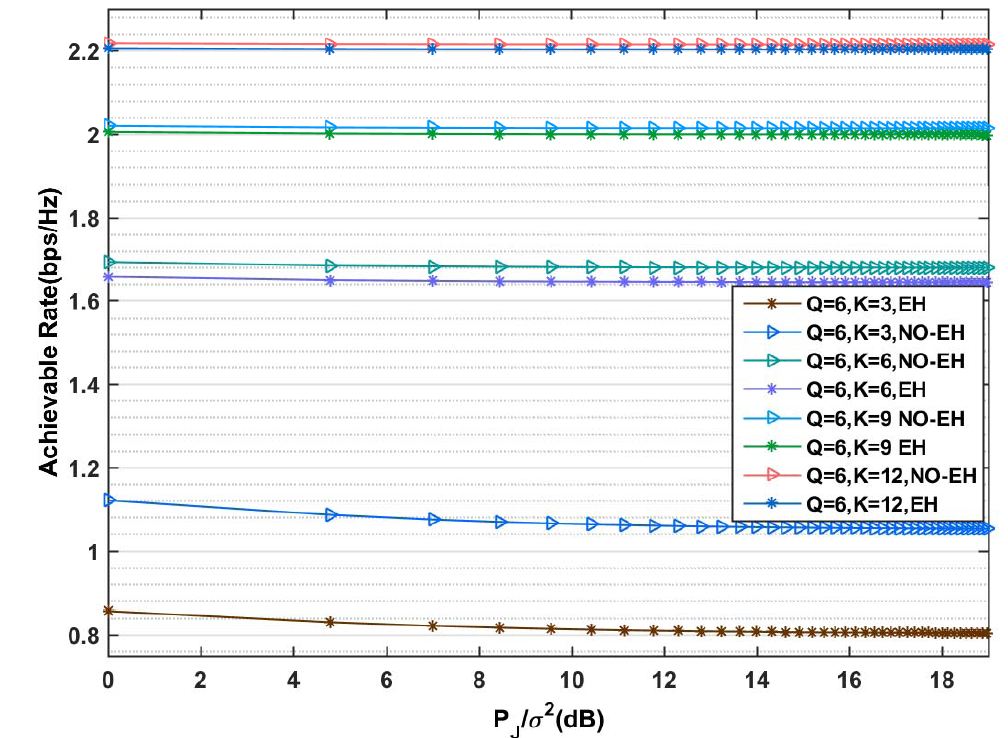}
	\caption{Achievable rate with the increasing $P_J$ under different $K$.}
    \label{figure4}
\end{minipage}%
\begin{minipage}[t]{0.5\linewidth}
\centering
	\includegraphics[width=2.5in]{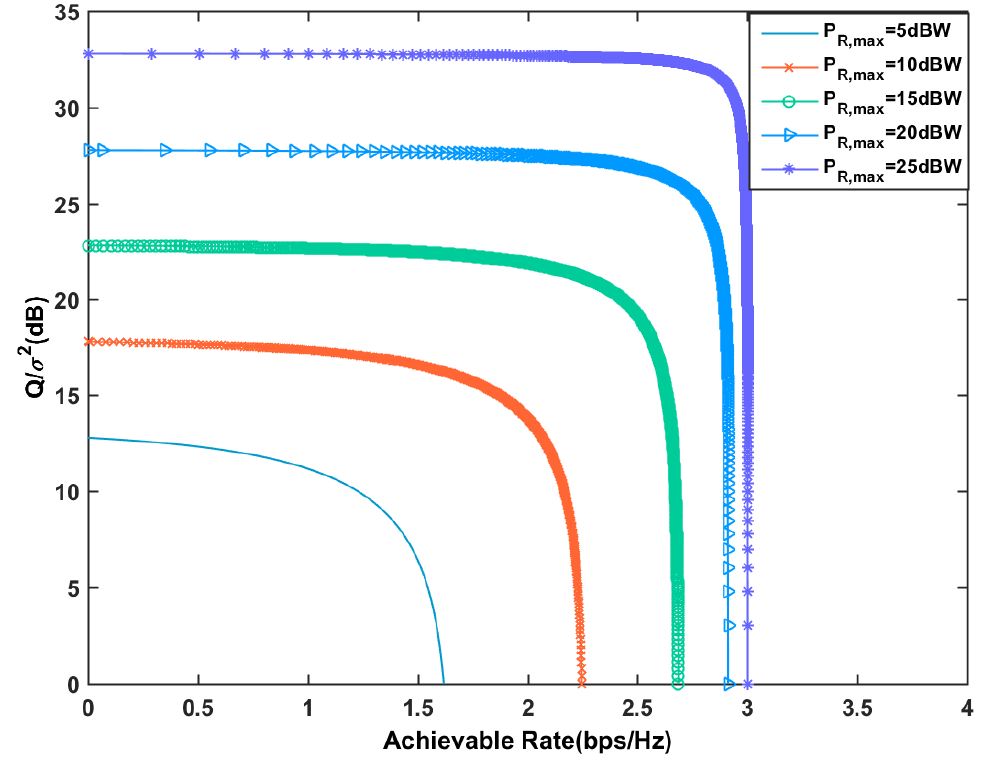}
	\caption{Achievable rate-energy regions for the increasing $P_{R,\max}$.}
    \label{figure5}
\end{minipage}
\end{figure}

The trend for the transmission signal and energy is exhibited in Fig. \ref{figure1}. It can be seen that the rate-energy region is expanding with the number of antennas at the relay. Moreover, given the number of antennas at the relay, the rate-energy region is expanding when the transmission power $P_S$ is increases.

As shown in Fig. \ref{figure3}, given $P_S$ and $K$, we discuss the effect of $Q$ for achievable rate with the increasing $P_J$. Obviously, when the value of $Q$ is climbing, the distance between the model with EH and NO-EH is growing. This phenomenon occurs due to that the power at the relay transferred to the ID receiver is decreasing when the required energy is rising. For the same reason, the achievable rate of the NO-EH model is higher than the EH model.

Fig. \ref{figure4} demonstrates that the achievable rate increases with the relaxation of the jamming power threshold $P_J$ with different $K$. From the plot, it can be seen that the achievable rate is significantly reduced if the jamming power is small because this kind of jamming signal can be seen as noise. However, when the value of $P_J$ is greater than a certain value, the achievable rate remains constant since beamforming technology eliminates jamming signals, making the transmission rate less affected by jamming signals. Furthermore, the achievable rate improves with the increasing antenna number $K$. For the addition of $K$, the spatial diversity is added, and additional diversity gain is obtained, thus it improves the system transmission performance.

Fig. \ref{figure5} shows the change of rate-energy for different relay power budget $P_{R,\max}$. Only the stable values of $P_S$, $P_J$ and $K$ are considered, i.e., $P_S=P_J=6\text{dB}$, $K=6$. We observe from Fig. \ref{figure5} that the rate-energy is expanding when $P_{R,\max}$ is increases, while if $P_{R,\max}$ reaches a certain value, the expansion rate tends to zero.

\section{Conclusion}\label{sec:conclusion}

In this paper, we investigate the optimization of multi-antenna relay with a jammer based on an EHWSRN architecture. To specific calculating process, we model this network as an equivalent GAVC related to the beamforming matrix at the relay. Based on the previous hypothesis, the optimization problem is formulated to maximize the SINR performance under the constraints of EH, sum power and jammer. After further analysis, we showed that this problem is non-convex which is difficult to solve in polynomial time. There are several strategies from three aspects, which are system stability, lower computational complexity and the combination, to change it from non-convex to SDP. The rank-1 constraint of the optimization problem is proved. Simulation results are used to demonstrate the efficiency of the proposed strategy in terms of the achievable rate performance at anti-jamming and energy consumption.


%

\appendices
\section{The proof of the theorem 1}
Assuming that the fading vectors of the channel  are ${{\bf{h}}_S},{{\bf{h}}_J},{{\bf{h}}_D},{{\bf{h}}_E}$. The singular value decomposition of the constructed matrix $[{{\bf{h}}_S},{{\bf{h}}_J},{{\bf{h}}_D},{{\bf{h}}_E}]$ can be shown as $[{{\bf{h}}_S},{{\bf{h}}_J},{{\bf{h}}_D},{{\bf{h}}_E}]={\bf{U}}{\bf{\Omega}}{\bf{V}}^\dag$, where $\mathbf{U}$ is a unitary matrix and ${\bf{\Omega}}=diag{(\omega_1,\omega_2,...,\omega_r,0,...,0)}$. $\omega_i (i=1,2,...,r)$ are the positive singular values of the matrix $[{{\bf{h}}_S},{{\bf{h}}_J},{{\bf{h}}_D},{{\bf{h}}_E}]$, where $r$ is the number of non-zero singular values. Let $\mathbf{U}=[{\bf{U}}_1,{\bf{U}}_2]$, where ${\bf{U}}_1$ and ${\bf{U}}_2$ represent the former $r$ and the latter $K-r$ columns of ${\bf{U}}$, respectively. The beamforming matrix $\mathbf{A}$ of the optimization problem \eqref{case2:11} can be written as the following form
\begin{align}
\label{The proof of the A's form }
&\mathbf{A}={\mathbf{U}_1}^\ast\mathbf{B}{\mathbf{U}_1}^\dag+{\mathbf{U}_1}^\ast\mathbf{C}{\mathbf{U}_2}^\dag.
\end{align}
where, $\mathbf{B}\in\mathbb{C}^{r\times r}$, $\mathbf{C}\in\mathbb{C}^{r\times (K-r)}$.

\begin{proof}
Since $\mathbf{U}=[{\mathbf{U}_1},{\mathbf{U}_2}]$ is a unitary matrix, the following conclusions can be drawn by the definition of the unitary matrix: ${\mathbf{U}_1}\bot{\mathbf{U}_2}$, ${\mathbf{U}_1}^\dag{\mathbf{U}_2}=0 $, ${\mathbf{U}_1}^\dag{\mathbf{U}_1}={\mathbf{I}_r}$, and ${\mathbf{U}_2}^\dag{\mathbf{U}_2}={\mathbf{I}_{K-r}}$. Let $\mathbf{A}$ denote the following block matrix
\begin{align}
\label{The proof of the A's form 1}
\mathbf{A}&={\mathbf{U}}^\ast \begin{bmatrix}
\mathbf{B}&\mathbf{C}\\
\mathbf{D}&\mathbf{E}
\end{bmatrix}{\mathbf{U}}^\dag\nonumber\\
&=[{\mathbf{U}_1}^\ast,{\mathbf{U}_2}^\ast]\begin{bmatrix}
\mathbf{B}&\mathbf{C}\\
\mathbf{D}&\mathbf{E}
\end{bmatrix}\begin{bmatrix}{\mathbf{U}_1}^\dag\\
{\mathbf{U}_2}^\dag \end{bmatrix}\\
&={\mathbf{U}_1}^\ast\mathbf{B}{\mathbf{U}_1}^\dag+{\mathbf{U}_1}^\ast\mathbf{C}{\mathbf{U}_2}^\dag+{\mathbf{U}_2}^\ast\mathbf{D}{\mathbf{U}_1}^\dag+{\mathbf{U}_2}^\ast\mathbf{E}{\mathbf{U}_2}^\dag.\nonumber
\end{align}
where, $\mathbf{B}\in\mathbb{C}^{r\times r}$, $\mathbf{C}\in\mathbb{C}^{r\times (K-r)}$, $\mathbf{D}\in\mathbb{C}^{(K-r)\times r}$, $\mathbf{E}\in\mathbb{C}^{(K-r)\times (K-r)}$. Moreover, there is the following formula
\begin{align}
\label{The proof of the A's form 2}
{\mathbf{U}}^\dag[{{\bf{h}}_S},{{\bf{h}}_J},{{\bf{h}}_D},{{\bf{h}}_E}]&={\mathbf{U}}^\dag{\bf{U}}{\bf{\Omega}}{\bf{V}}^\dag={\bf{\Omega}}{\bf{V}}^\dag
=\begin{bmatrix}{\hat{\mathbf{\Omega}}}\\
\mathbf{0}\end{bmatrix}{\bf{V}}^\dag
=\begin{bmatrix}{\hat{\mathbf{\Omega}}}{\bf{V}}^\dag\\
\mathbf{0}\end{bmatrix}.
\end{align}
where ${\hat{\mathbf{\Omega}}}$ is a matrix consisting of the first $r$ rows of ${\bf{\Omega}}$. At the same time, there is another way to write the above formula.
\begin{align}
\label{The proof of the A's form 3}
{\mathbf{U}}^\dag[{{\bf{h}}_S},{{\bf{h}}_J},{{\bf{h}}_D},{{\bf{h}}_E}]&=\begin{bmatrix}{\mathbf{U}_1}^\dag\\
{\mathbf{U}_2}^\dag \end{bmatrix}[{{\bf{h}}_S},{{\bf{h}}_J},{{\bf{h}}_D},{{\bf{h}}_E}]\\
&=\begin{bmatrix}{\mathbf{U}_1}^\dag{{\mathbf{h}}_S},{\mathbf{U}_1}^\dag{{\mathbf{h}}_J},{\mathbf{U}_1}^\dag{{\mathbf{h}}_D},{\mathbf{U}_1}^\dag{{\mathbf{h}}_E}\\
{\mathbf{U}_2}^\dag{{\mathbf{h}}_S},{\mathbf{U}_2}^\dag{{\mathbf{h}}_J},{\mathbf{U}_2}^\dag{{\mathbf{h}}_D},{\mathbf{U}_2}^\dag{{\mathbf{h}}_E} \end{bmatrix}.\nonumber
\end{align}

By comparing formulas \eqref{The proof of the A's form 2} and \eqref{The proof of the A's form 3}, there are several conclusions, which are ${\mathbf{U}_2}^\dag{\mathbf{h}_S}=0$, ${\mathbf{U}_2}^\dag{\mathbf{h}_J}=0$, ${\mathbf{U}_2}^\dag{\mathbf{h}_D}=0$, and ${\mathbf{U}_2}^\dag{\mathbf{h}_E}=0$. Considering these conclusions, by reasoning, a formula can be obtained as follows
\begin{align}
\label{The proof of the A's form 4}
|{{\mathbf{h}}_D}^T{\mathbf{A}}{{\mathbf{h}}_S}{|^2}&=|{{\mathbf{h}}_D}^T({\mathbf{U}_1}^\ast\mathbf{B}{\mathbf{U}_1}^\dag+{\mathbf{U}_1}^\ast\mathbf{C}{\mathbf{U}_2}^\dag\nonumber\\
&+{\mathbf{U}_2}^\ast\mathbf{D}{\mathbf{U}_1}^\dag+{\mathbf{U}_2}^\ast\mathbf{E}{\mathbf{U}_2}^\dag){{\mathbf{h}}_S}{|^2}\nonumber\\
&=|{{\mathbf{h}}_D}^T{\mathbf{U}_1}^\ast\mathbf{B}{\mathbf{U}_1}^\dag{{\mathbf{h}}_S}{|^2}.
\end{align}

The same as the above analysis, there are
\begin{align}
\label{The proof of the A's form 5}
&|{{\mathbf{h}}_D}^T{\mathbf{A}}{{\mathbf{h}}_J}{|^2}=|{{\mathbf{h}}_D}^T{\mathbf{U}_1}^\ast\mathbf{B}{\mathbf{U}_1}^\dag{{\mathbf{h}}_J}{|^2}\\
&|{{\mathbf{h}}_E}^T{\mathbf{A}}{{\mathbf{h}}_S}{|^2}=|{{\mathbf{h}}_E}^T{\mathbf{U}_1}^\ast\mathbf{B}{\mathbf{U}_1}^\dag{{\mathbf{h}}_S}{|^2}\\
&|{{\mathbf{h}}_E}^T{\mathbf{A}}{{\mathbf{h}}_J}{|^2}=|{{\mathbf{h}}_E}^T{\mathbf{U}_1}^\ast\mathbf{B}{\mathbf{U}_1}^\dag{{\mathbf{h}}_J}{|^2}.
\end{align}

Observing the above formulas, they are independent of $\mathbf{C}$, $\mathbf{D}$ and $\mathbf{E}$. Similarly, based on previous equations, there is the following formula
\begin{align}
\label{The proof of the A's form 6}
\lVert{\mathbf{A}}{{\mathbf{h}}_S}{\rVert}^2&=\lVert{\mathbf{U}_1}^\ast\mathbf{B}{\mathbf{U}_1}^\dag{{\mathbf{h}}_S}+{\mathbf{U}_2}^\ast\mathbf{B}{\mathbf{U}_1}^\dag{{\mathbf{h}}_S}{\rVert}^2\nonumber\\
&=\lVert\mathbf{B}{\mathbf{U}_1}^\dag{{\mathbf{h}}_S}{\rVert}^2+\lVert\mathbf{D}{\mathbf{U}_1}^\dag{{\mathbf{h}}_S}{\rVert}^2.
\end{align}

Considering that ${\mathbf{U}_1}^\dag{\mathbf{U}_2}=0$, ${\mathbf{U}_1}^\dag{\mathbf{U}_1}={\mathbf{I}_r}$ and ${\mathbf{U}_2}^\dag{\mathbf{U}_2}={\mathbf{I}_{K-r}}$ are established, for the same reason, there are the following equations
\begin{align}
\label{The proof of the A's form 7}
&\lVert{\mathbf{A}}{{\mathbf{h}}_J}{\rVert}^2=\lVert\mathbf{B}{\mathbf{U}_1}^\dag{{\mathbf{h}}_J}{\rVert}^2+\lVert\mathbf{D}{\mathbf{U}_1}^\dag{{\mathbf{h}}_J}{\rVert}^2\\
&\lVert{\mathbf{A}}{{\mathbf{h}}_D}{\rVert}^2=\lVert\mathbf{B}{\mathbf{U}_1}^\dag{{\mathbf{h}}_D}{\rVert}^2+\lVert\mathbf{D}{\mathbf{U}_1}^\dag{{\mathbf{h}}_D}{\rVert}^2\\
&\lVert{{\mathbf{h}}_D}^T{\mathbf{A}}{\rVert}^2=\lVert{{\mathbf{h}}_D}^T{\mathbf{U}_1}^\ast\mathbf{B}{\rVert}^2+\lVert{{\mathbf{h}}_D}^T{\mathbf{U}_1}^\ast\mathbf{C}{\rVert}^2\\
&\lVert{{\mathbf{h}}_E}^T{\mathbf{A}}{\rVert}^2=\lVert{{\mathbf{h}}_E}^T{\mathbf{U}_1}^\ast\mathbf{B}{\rVert}^2+\lVert{{\mathbf{h}}_E}^T{\mathbf{U}_1}^\ast\mathbf{C}{\rVert}^2\\
&tr(\mathbf{A}{\mathbf{A}}^\dag)=tr(\mathbf{B}{\mathbf{B}}^\dag)+tr(\mathbf{C}{\mathbf{C}}^\dag)+tr(\mathbf{D}{\mathbf{D}}^\dag)+tr(\mathbf{E}{\mathbf{E}}^\dag).
\end{align}

Based on the above analysis, we analyze optimization problem \eqref{case2:11}. Since the limited length of this paper, let $\delta=|{{\mathbf{h}}_D}^T{\mathbf{U}_1}^\ast\mathbf{B}{\mathbf{U}_1}^\dag{{\mathbf{h}}_J}{|^2}+\lVert{{\mathbf{h}}_D}^T{\mathbf{U}_1}^\ast\mathbf{B}{\rVert}^2+\lVert{{\mathbf{h}}_D}^T{\mathbf{U}_1}^\ast\mathbf{C}{\rVert}^2+1$,  then it can be written as
\begin{align}
\label{The proof of the A's form 8}
\underset{\bf{A}}{\max}\ \
&\frac{|{{\mathbf{h}}_D}^T{\mathbf{U}_1}^\ast\mathbf{B}{\mathbf{U}_1}^\dag{{\mathbf{h}}_S}{|^2}}{\delta}\nonumber \\
\text{s.t.} \ \ &{\mathbf{P}_S}\lVert\mathbf{B}{\mathbf{U}_1}^\dag{{\mathbf{h}}_S}{\rVert}^2+{\mathbf{P}_S}\lVert\mathbf{D}{\mathbf{U}_1}^\dag{{\mathbf{h}}_S}{\rVert}^2+{\mathbf{P}_J}\lVert\mathbf{B}{\mathbf{U}_1}^\dag{{\mathbf{h}}_J}{\rVert}^2+{\mathbf{P}_J}\lVert\mathbf{D}{\mathbf{U}_1}^\dag{{\mathbf{h}}_J}{\rVert}^2+tr(\mathbf{B}{\mathbf{B}}^\dag)\nonumber\\
&+tr(\mathbf{C}{\mathbf{C}}^\dag)+tr(\mathbf{D}{\mathbf{D}}^\dag)+tr(\mathbf{E}{\mathbf{E}}^\dag)\le P_{R,\max}\nonumber\\
&|{{\mathbf{h}}_E}^T{\mathbf{U}_1}^\ast\mathbf{B}{\mathbf{U}_1}^\dag{{\mathbf{h}}_S}{|^2}{P_S} + |{{\mathbf{h}}_E}^T{\mathbf{U}_1}^\ast\mathbf{B}{\mathbf{U}_1}^\dag{{\mathbf{h}}_J}{|^2}{P_J}+\lVert{{\mathbf{h}}_E}^T{\mathbf{U}_1}^\ast\mathbf{B}{\rVert}^2+\lVert{{\mathbf{h}}_E}^T{\mathbf{U}_1}^\ast\mathbf{C}{\rVert}^2\ge Q \nonumber \\
&\frac{|{{\mathbf{h}}_D}^T{\mathbf{U}_1}^\ast\mathbf{B}{\mathbf{U}_1}^\dag{{\mathbf{h}}_J}{|^2}{P_J}}{|{{\mathbf{h}}_D}^T{\mathbf{U}_1}^\ast\mathbf{B}{\mathbf{U}_1}^\dag{{\mathbf{h}}_S}{|^2}{P_S}}<1.
\end{align}

Obviously, the second and third constraints and the objective function are independent of the matrix $\mathbf{D}$, $\mathbf{E}$, and only related to the matrix $\mathbf{B}$, $\mathbf{C}$. Given matrices $\mathbf{B}$, $\mathbf{C}$, when $\mathbf{D}=0$, $\mathbf{E}=0$, the second and third constraints remain unchanged, and the feasible domain determined by the first constraint will expand. Therefore, the optimal solution for optimization problem \eqref{case2:11} is only related to the matrix $\mathbf{B}$, $\mathbf{C}$. For matrix $\mathbf{A}$, the following equation exists, which is $\mathbf{A}={\mathbf{U}_1}^\ast\mathbf{B}{\mathbf{U}_1}^\dag+{\mathbf{U}_1}^\ast\mathbf{C}{\mathbf{U}_2}^\dag$.

\end{proof}

\section{The proof of the theorem 2}
Assuming that $(\mathbf{X},b,c)$ is one of the optimal solutions for optimization problem \eqref{case2:19}, then there is always a $(\mathbf{X}^\star,b^\star,c^\star)$ satisfying $rank(\mathbf{X}^\star)=1$ and we can find it in polynomial time.
\begin{proof}
The proof of the proposition will start from two aspects, ${{\bf{h}}_S} \parallel {{\bf{h}}_J}$ and ${{\bf{h}}_S} \nparallel {{\bf{h}}_J}$.
\subsubsection{${{\bf{h}}_S} \parallel {{\bf{h}}_J}$}
Let ${{\bf{h}}_1}=t_1{{\bf{h}}}$, ${{\bf{h}}_2}=t_2{{\bf{h}}}$, ${{\bf{h}}_3}=m_1\hat{{{\bf{h}}}}$, ${{\bf{h}}_4}=m_2\hat{{{\bf{h}}}}$, ${{\bf{H}}_2}=r_1{{\bf{H}}}$, ${{\bf{H}}_3}=r_2{{\bf{H}}}$, where $t_1$, $t_2$, $m_1$, $m_2$, $r_1$, $r_2$ are constants, then \eqref{case2:18} can be rewritten as
\begin{align}
\label{case2:20}
&\underset{\mathbf{X},b^2,c^2}{\max}\ \
tr({|{t_1}{|^2}{\mathbf{h}}{\mathbf{h}}^\dag \mathbf{X}}) \nonumber \\
&\text{s.t.} \ \ {\rho_0}{c^2}+tr({\rho_1}{{\bf{H}}_1}{{\bf{H}}_1}^\dag \mathbf{X})+{\rho_2}{b^2}\le1,tr(|{t_2}{|^2}{\mathbf{h}}{\mathbf{h}}^\dag \mathbf{X})\le{c^2}\nonumber\\
&tr(|{t_1}{|^2}{\mathbf{h}}{\mathbf{h}}^\dag \mathbf{X})\ge \frac{{\rho_0} c^2}{\varepsilon},\mathbf{X}\succeq 0,rank(\mathbf{X})\le1\\
&tr[(|{m_1}{|^2}{\hat{{\mathbf{h}}}}{\hat{{\mathbf{h}}}}^\dag{P_S}+|{m_2}{|^2}{\hat{{\mathbf{h}}}}{\hat{{\mathbf{h}}}}^\dag{P_J}+{\mathbf{H}_4}{\mathbf{H}_4}^\dag{\sigma_R}^2)\mathbf{X}]\ge {Q}{b^2} \nonumber \\
&tr[(|{r_1}{|^2}{{\mathbf{H}}}{{\mathbf{H}}}^\dag{P_S}+|{r_2}{|^2}{{\mathbf{H}}}{{\mathbf{H}}}^\dag{P_J}+{\sigma_R}^2{{\mathbf{I}}})\mathbf{X}]\le {P_{R,\max}}{b^2}\nonumber
\end{align}

Observing \eqref{case2:20}, it is easy to know that the second and third constraints can be classified as one constraint. Therefore, the problem reduces one constraint and becomes four constraints. It can be seen that the SDP problem with four constraints satisfies the rank-1 constraint \cite{t13}.

\subsubsection{${{\bf{h}}_S} \nparallel {{\bf{h}}_J}$}
The K-K-T equation is used to study the properties of the rank of the optimal solution obtained in \eqref{case2:18}. The Lagrangian function of \eqref{case2:18} is as follows
\begin{align}
\label{case2:21}
&\mathcal{L}(\mu,\nu,m,n,\omega,\mathbf{Y})=tr[((1+m){{\bf{h}}_1} {{\bf{h}}_1}^\dag-\mu {\rho_1}\mathbf{H}_1\mathbf{H}_1^\dag+\nu {{\bf{h}}_2} {{\bf{h}}_2}^\dag+n(P_S{{\bf{h}}_3}{{\bf{h}}_3}^\dag+P_J{{\bf{h}}_4} {{\bf{h}}_4}^\dag\nonumber \\
&+\mathbf{H}_4\mathbf{H}_4^\dag{\sigma_R}^2)-\omega(P_S\mathbf{H}_2\mathbf{H}_2^\dag+P_J\mathbf{H}_3\mathbf{H}_3^\dag+{\sigma_R}^2\mathbf{I})+\mathbf{Y})\mathbf{X}]-\mu {\rho_0}c^2-\mu {\rho_2}b^2+\mu-\nu c^2\nonumber \\
&-\frac{{\rho_0}mc^2}{\varepsilon}-nQb^2+\omega P_{R, \max }b^2
\end{align}
where, $\mu,\nu,m,n,\omega$ and $\mathbf{Y}$ are Lagrangian multipliers for the constraint of problem \eqref{case2:18}, respectively. Since the constraint corresponding to $\nu$ is an equation, $\nu=0$ or $\nu>0$ is not set for subsequent analysis. To obtain a finite value for $\mathcal{L}$ under the conditions of arbitrary $\mu \ge0,\nu,m\ge0,n\ge0,\omega\ge0$ and $\bf{Y}\succeq 0$ , the following condition must be satisfied.
\begin{align}
\label{case2:22}
&((1+m){{\bf{h}}_1} {{\bf{h}}_1}^\dag-\mu {\rho_1}\mathbf{H}_1\mathbf{H}_1^\dag+\nu {{\bf{h}}_2} {{\bf{h}}_2}^\dag+n(P_S{{\bf{h}}_3} {{\bf{h}}_3}^\dag+P_J{{\bf{h}}_4} {{\bf{h}}_4}^\dag+\mathbf{H}_4\mathbf{H}_4^\dag{\sigma_R}^2)\nonumber \\
&-\omega(P_S\mathbf{H}_2\mathbf{H}_2^\dag+P_J\mathbf{H}_3\mathbf{H}_3^\dag+{\sigma_R}^2\mathbf{I})+\bf{Y}=0
\end{align}

Through the strong duality theorem, $\bf{X}$ and $\bf{Y}$ can be obtained to satisfy the complementary relaxation relationship
\begin{equation}
\label{case2:24}
\bf{X}^\star \bf{Y}^\star=0
\end{equation}

Considering \eqref{case2:22} and \eqref{case2:24}, there is a formula as
\begin{align}
\label{case2:25}
&rank(\mathbf{X}^\star)=rank[(\mu^\star {\rho_1}\mathbf{H}_1\mathbf{H}_1^\dag+\omega^\star(P_S\mathbf{H}_2\mathbf{H}_2^\dag+P_J\mathbf{H}_3\mathbf{H}_3^\dag+{\sigma_R}^2\mathbf{I}))\bf{X}^\star]\nonumber\\
&=rank[((1+m^\star){{\bf{h}}_1} {{\bf{h}}_1}^\dag+n^\star(P_S{{\bf{h}}_3} {{\bf{h}}_3}^\dag+P_J{{\bf{h}}_4} {{\bf{h}}_4}^\dag+\mathbf{H}_4\mathbf{H}_4^\dag{\sigma_R}^2)+\nu^\star {{\bf{h}}_2} {{\bf{h}}_2}^\dag)\bf{X}^\star]\nonumber\\
&\le rank[((1+m^\star){{\bf{h}}_1} {{\bf{h}}_1}^\dag+n^\star(P_S{{\bf{h}}_3} {{\bf{h}}_3}^\dag+P_J{{\bf{h}}_4} {{\bf{h}}_4}^\dag+\mathbf{H}_4\mathbf{H}_4^\dag{\sigma_R}^2)+\nu^\star {{\bf{h}}_2} {{\bf{h}}_2}^\dag)]
\end{align}

If the fourth constraint inequality is strictly established, the complementary relaxation relationship can be obtained
\begin{align}
\label{case2:27}
&n^\star[tr[({{\mathbf{h}}_3} {{\mathbf{h}}_3}^\dag{P_S}+{{\mathbf{h}}_4} {{\mathbf{h}}_4}^\dag{P_J}+\mathbf{H}_4\mathbf{H}_4^\dag{\sigma_R}^2)\mathbf{X}]-{Q}{b^2}]=0\\
&\nu^\star[tr({{\mathbf{h}}_2}{{\mathbf{h}}_2}^\dag \mathbf{X})-{c^2}]=0
\end{align}

Therefore, the optimal value of $n^\star$, $\nu^\star$ are $n^\star=0$ and $\nu^\star=0$. When the first and fifth constraints are strictly equal, the objective function has a better value. Thus, through the complementary relaxation theorem, $\mu^\star$ and $\omega^\star$ are generally not zero. In this case, \eqref{case2:25} can be converted to
\begin{align}
\label{case2:26}
&rank(\mathbf{X}^\star)\le rank[(1+m^\star){{\mathbf{h}}_1} {{\mathbf{h}}_1}^\dag]\le 1
\end{align}

Hence, the optimal solution $\bf{X}^\star$ satisfies the rank-1 constraint.
\end{proof}

\ifCLASSOPTIONcaptionsoff
  \newpage
\fi



%
%
\bibliography{IEEEabrv,references}

%




\end{document}